\relax
\RequirePackage{etex}
\documentclass[a4paper, 11pt]{article} 
\usepackage[margin=1in]{geometry}
\usepackage[hyphens]{url}
\usepackage{graphicx}
\urlstyle{rm}

\usepackage{caption}
\frenchspacing 
\setlength{\pdfpagewidth}{8.5in}
\setlength{\pdfpageheight}{11in}

\usepackage{titlesec}
\titleformat{\section}{\normalfont\bfseries\large\raggedright}{\thesection.\ }{0.10cm}{}
\titlespacing*{\section}{0pt}{0.24in plus .8ex}{0.10in plus .2ex}
\titleformat{\subsection}{\normalfont\bfseries\normalsize\raggedright}{\thesubsection.\ }{0.08cm}{}
\titlespacing*{\subsection}{0pt}{0.2in plus .8ex}{0.08in plus .2ex}

\usepackage{natbib}
\bibpunct{(}{)}{;}{a}{,}{,}

\usepackage{hyperref}
\hypersetup{colorlinks,linkcolor=blue,citecolor=blue,
    urlcolor=magenta,linktocpage,plainpages=false}

\usepackage{amssymb,amsmath,amsthm,microtype,booktabs,enumitem,tikz,subcaption,mathtools,array,ifthen,pgf,bbm,comment}
\usepackage[export]{adjustbox}
\usepackage[utf8]{inputenc}
\usetikzlibrary{arrows,positioning,backgrounds,automata}

\newcommand{\reals}{\mathbb{R}}
\newcommand{\X}{\mathbb{X}}
\newcommand{\Y}{\mathbb{Y}}
\newcommand{\Z}{\mathbb{Z}}
\newcommand{\US}{\mathbb{S}}
\newcommand{\N}{\mathbb{N}}
\renewcommand{\P}{\mathbb{P}}

\newcommand{\One}{\mathbbm{1}}
\newcommand{\x}{\textbf{x}}
\newcommand{\y}{\textbf{y}}

\renewcommand{\bar}{\overline}
\newcommand{\Trans}{\intercal}
\DeclareMathOperator*{\relintnp}{relint}
\newcommand{\relint}[1]{\operatorname{\relintnp}(#1)}
\DeclareMathOperator*{\argmax}{arg\,max}
\DeclareMathOperator*{\expnp}{exp}
\newcommand{\ex}[1]{\operatorname{\expnp}(#1)}
\newcommand\restr[2]{{
  \left.\kern-\nulldelimiterspace 
  #1 
  \vphantom{\big|} 
  \right|_{#2} 
  }}

\newtheorem{definition}{Definition}
\newtheorem{theorem}{Theorem}
\newtheorem{corollary}{Corollary}
\newtheorem{lemma}{Lemma}
\newtheorem{proposition}{Proposition}

\title{No-Regret Learning in Games is Turing Complete}

\author{Gabriel P. Andrade\textsuperscript{\rm 1} \and Rafael Frongillo\textsuperscript{\rm 1} \and Georgios Piliouras\textsuperscript{\rm 2}\\}
\date{\textsuperscript{\rm 1} Department of Computer Science\\
University of Colorado Boulder\\
\{gabriel.andrade~;~raf\}@colorado.edu\\
\vskip8pt
\textsuperscript{\rm 2} Engineering Systems and Design\\
Singapore University of Technology and Design\\
georgios@sutd.edu.sg}

\begin{document}

\maketitle

\begin{abstract}
Games are natural models for  multi-agent machine learning settings, such as generative adversarial networks~(GANs).
The desirable outcomes from algorithmic interactions in these games are encoded as game theoretic equilibrium concepts, e.g.~Nash and coarse correlated equilibria.
As directly computing an equilibrium is typically impractical, one often aims to design learning algorithms that iteratively converge to equilibria.
A growing body of negative results casts doubt on this goal, from non-convergence to chaotic and even arbitrary behaviour.
In this paper we add a strong negative result to this list: learning in games is Turing complete.
Specifically, we prove Turing completeness of the replicator dynamic on matrix games, one of the simplest possible settings.
Our results imply the undecicability of reachability problems for learning algorithms in games, a special case of which is determining equilibrium convergence.
\end{abstract}

\section{Introduction}\label{sec:intro}
Many multi-agent machine learning settings can be modeled as games, from social or economic systems with algorithmic decision-makers to popular learning architectures such as generative adversarial networks~(GANs).
Desired outcomes in these settings are often encoded as equilibrium concepts, and therefore a primary goal is identifying machine learning algorithms with provable convergence to these equilibria.

While there has been progress in deriving strong time-average convergence guarantees for popular online learning algorithms, the per-iteration behaviour of learning in games remains elusive.
Recent results attempt to formalize how elusive these dynamics can be, from non-convergence results to establishing chaotic, or even essentially arbitrary, behaviour~\citep{andrade2021learning,Benam2012PerturbationsOS,flokas2020no,giannou21survival,letcher2021impossibility}.
Experiments confirm that chaos can actually be typical behaviour~\citep{sanders2018prevalence}.

In this work, we add an even more sobering negative result to this list: learning in games is Turing complete.
Specifically, we show that replicator dynamics in matrix games, one of the simplest possible settings, can simulate an arbitrary Turing machine (Theorem~\ref{thm:main}).
Here simulation is defined in terms of reachability, a natural decision problem for dynamical systems that asks whether a given system and initial condition eventually intersects (reaches) a certain set; a dynamical system simulates a Turing machine if the corresponding halting problem reduces to the reachability problem.
Our proof combines two recent results, on the Turing completeness of fluid dynamics~\citep{cardona2021turing}, and on the approximate universality of learning in games~\citep{andrade2021learning}.

We believe our results have far-reaching implications for the literature on learning in games.
Most immediate is the fact that the reachability problem is undecidable for no-regret learning in general~(Corollary~\ref{cor:genral_undecidable}).
This result calls into question the feasibility of equilibration as a goal, since even deciding whether a learning algorithm gets close to an equilibrium is undecidable.
More broadly, these results establish the computational power of learning dynamics in games---and accordingly, their inherent complexity as formalized by computabiity theory.

Beyond the continuous-time setting,
we borrow tools from numerical analysis to show that the multiplicative weights algorithm can simulate any bounded Turing machine~(Theorem~\ref{thm:bounded_time}).
Extending this analysis to arbitrary Turing machines, and thus establishing Turing completeness for the discrete-time setting, may not be possible with the techniques we consider.
Establishing (or refuting) the Turing completeness of multiplicative weights is therefore left as an important open question, and one that will likely require entirely new techniques.

\section{Preliminaries}\label{sec:prelim}

\subsection{Matrix Games}
A \emph{finite $n$-player normal form game} consists of $n$ agents $[n] = \{1, \dots, n\}$, where each agent $i \in [n]$ can choose actions from a finite action set $S_i$.
Actions are chosen by agent $i$ according to a \emph{mixed strategy}, a distribution $\x_i$ in the probability $|S_i|$-simplex $\Delta^{|S_i|} = \{\x_i \in \reals^{|S_i|}_+ : \sum_{s \in S_i} x_{i s} = 1\}$.
In normal form games, agents receive payoffs from pairwise interactions according to payoff matrices $A_{i,j} \in \reals^{|S_i| \times |S_j|}$ where $i,j \in [n]$ and $i \neq j$.
Given that mixed strategies $\x_i \in \Delta^{|S_i|}$ and $\x_j \in \Delta^{|S_j|}$ are chosen, agent $i$ receives payoff $\x^\Trans_i A_{i,j} \x_j$.
These payoffs yield a natural optimization problem for each agent, where agents act strategically and independently to maximize their expected payoff over the other agents' mixed strategies, i.e.
\begin{equation}\label{eq:game_opt}
    \max\limits_{\x_i \in \Delta^{|S_i|}} \sum_{j \in [n];~j \neq i} \x_i^\Trans A_{i,j} \x_j, \qquad i \in [n]~.
\end{equation}
Throughout the paper we'll restrict our attention to the case known as \emph{matrix games}, when $n=2$.

\subsection{Follow-the-Regularized-Leader~(FTRL) Learning and Replicator Dynamics}\label{subsec:learning_prelim}
In many game settings, the optimization in eq.~\eqref{eq:game_opt} is a moving target since the opponent adaptively updates their strategy and the payoff matrix may be unknown.
In such settings, arguably the most well known class of algorithms is Follow-the-Regularized-Leader (FTRL). 
The continuous-time version of an FTRL algorithm is as follows.
Given initial payoff vector $\y_i(0)$, an agent $i$ that plays against agent $j$ in a matrix game $A_{i,j}$ updates their strategy at time $t$ according to
\begin{equation}
\label{eqn:FTRL}
\begin{aligned}
\y_i(t)&= \y_{i}(0)+\int_0^t A_{i,j}\x_{j}(s) ds \\
\x_i(t)&= \argmax_{\x_i\in \Delta^{|S_i|}} \{\langle \x_i, \y_i(t)\rangle-h_i(\x_i)\} 
\end{aligned}
\end{equation}
where $h_i$ is strongly convex and continuously differentiable. FTRL effectively performs a balancing act between exploration and exploitation.
The cumulative payoff vector $\y_i(t)$ indicates the total payouts until time $t$, i.e.~if agent $i$ had played strategy $s_i \in S_i$ continuously from $t=0$ until time $t$, agent $i$ would receive a total reward of $\y_{is_i}(t)$. 
The two most well-known instantiations of FTRL dynamics are the online gradient descent algorithm when $h_i(\x_i)=||\x_i||_2^2$,
and the replicator dynamics (the continuous-time analogue of Multiplicative Weights Update~\citep{Arora05themultiplicative}) when $h_i(\x_i)=\sum_{s_i\in {\cal S}_i} \x_{is_i}\ln \x_{is_i}$. 
FTRL dynamics in continuous time has bounded regret in arbitrary games~\citep{mertikopoulos2018cycles}. 
For more information on FTRL dynamics and online optimization, see \cite{Shalev2012}.

In this paper, we will focus on \emph{replicator dynamics} (RD) as the learning process generating game dynamics. 
In addition to its role in optimization, replicator dynamics is the prototypical dynamic studied in evolutionary game theory~\citep{Weibull,Sandholm10} and is one of the key mathematical models of evolution and biological competition~\citep{Schuster1983533,taylor1978evolution}. 
In this context, replicator dynamics can be thought of as a normalized form of ecological population models, and is studied given a single payoff matrix $A$ and a single probability distribution $\x$ that can be thought abstractly as capturing the proportions of different species/strategies in the current population. 
Species/strategies get randomly paired up and the resulting payoff determines which strategies will increase/decrease over time.

Formally, the dynamics are as follows.
Let $A \in \reals^{m \times m}$ be a matrix game and $\x \in \Delta^m$ be the mixed strategy played.
RD on $A$ are given by:
\begin{equation} \label{eq:replicator}
    \dot{x}_i = \frac{d x_i}{d t} = x_i \left((A\x)_i - \x^\Trans A \x \right), \qquad i \in [n]
\end{equation}
Under the symmetry of $A_{i,j}=A_{j,i}$, and of initial conditions (i.e. $\x_i=\x_j$), it is immediate to see that under the $\x_i, \x_j$ solutions of eq.~\eqref{eqn:FTRL} are identical to each  other and to the solution of eq.~\eqref{eq:replicator} with $A=A_{i,j}=A_{j,i}$. 
For our purposes, it will suffice to focus on exactly this setting of matrix games defined by a single payoff matrix $A$ and a single probability distribution $\x$, which is actually the standard setting within evolutionary game theory.

\subsection{Dynamical Systems Theory}\label{subsec:dynamics_prelim}
A dynamical system is a mathematical model of a time-evolving process.
The objects undergoing change in a dynamical system is called its \emph{state} and is often denoted by $\x \in \X$, where $\X$ is a topological space called a \emph{state space}. 
For most of this paper we will be focusing on \emph{continuous time} systems, but in~\S\ref{sec:discretization} we will consider \emph{discrete time} systems derived from numerical approximations of their continuous counterpart.
To distinguish between continuous and discrete time, we will use $\x(t)$ to describe the state as a function of continuous time $t \in \reals$ and $\x^t$ to describe the state as a function of discrete time $t \in \Z$.

Change between states in a continuous time dynamical system is described by a \emph{flow} $\Phi: \X \times \reals \to \X$ satisfying two properties:
\begin{enumerate}[label=(\roman*)]
    \item For each $t \in \reals$, $\Phi(\cdot,t): \X \to \X$ is bijective, continuous, and has a continuous inverse.
    \vspace*{-2pt}
    \item For every $s,t \in \reals$ and $\x \in \X$, $\Phi(\x,s+t) = \Phi(\Phi(\x,t),s)$.
\end{enumerate}
Intuitively, flows describe the evolution of states in the dynamical system.
Given a time $t \in \reals$, the flow gives us the relative movement of every point $\x \in \X$; we will denote this by the map $\Phi^t : \X \to \X$.
Similarly, given a point $\x \in \X$, the flow captures the trajectory of $\x$ as a function of time; in an abuse of notation, we will denote this by $\x(t)$ where $t$ is changing.

Continuous time dynamical systems are often given as systems of \emph{ordinary differential equations} (ODEs).
Systems of ODEs describe a \emph{vector field} $V:\X \to T\X$ which assigns to each $\x \in \X$ a vector in the tangent space of $\X$ at $\x$.
The unit sphere $\US^{n} = \{\x \in \reals^{n+1} : \|x\|_2^2 = 1\}$ will play a special role in proving Theorem~\ref{thm:main}, in which case the tangent space $T\US^n$ at each $\x \in \US^n$ is $\{\y \in \reals^n : \x \cdot \y = 0\}$.
Intuitively, the tangent space defines bundles of vectors that ensure the system's states remain well defined on the state space as time progresses.
A system of ODEs is said to \emph{generate} (or~\emph{give}) a flow $\Phi$ if $\Phi$ describes a solution of the ODEs at each point $\x \in \X$. 
Throughout this paper we assume that all dynamical systems discussed can be given by a system of ODEs.
For this reason, we will use the term \emph{dynamical system} to refer to the system of ODEs, the associated vector field, and a generated flow interchangeably.
A well known result in dynamical systems theory states that, for Lipschitz-continuous systems of ODEs, the generated flow is unique (see~\cite{perko1991differentialeqsbook,meiss2007differentialbook}) and using these terms interchangeably is well defined.

An important notion for proving Theorem~\ref{thm:main}, and dynamical systems in general, is that of a \emph{global attracting set} of the dynamical system.
Let $\Phi$ be a flow generated by some dynamical system on $\X$.
We say $\Y \subset \X$ is \emph{forward invariant} for the flow $\Phi$ if $\Phi^t(\y) \in \Y$ for every $t \geq 0$, $\y \in \Y$.
We say $\Y \subset \X$ is \emph{globally attracting} for the flow $\Phi$ if $\Y$ is nonempty, forward invariant, and
\begin{equation}
  \Y \supseteq \bigcap\limits_{t > 0} \{\Phi^t(\x) : \x \in \X\}~.
\end{equation}
Stated informally, if $\Y$ is globally attracting it will eventually capture the dynamics of $\Phi$ starting from any point in $\X$ after some transitionary period of time.

Now let $\X$ and $\Y$ be two topological spaces.
We say that a function $f: \X \to \Y$ is a \emph{homeomorphism} if (i) $f$ is bijective, (ii) $f$ is continuous, and (iii) $f$ has a continuous inverse.
Furthermore, two flows $\Phi: \X \times \reals \to \X$ and $\Psi: \Y \times \reals \to \Y$ are \emph{homeomorphic} if there exists a homeomorphism $g: \X \to \Y$ such that for each $\x \in \X$ and $t \in \reals$ we have $g(\Phi(\x,t)) = \Psi(g(\x),t)$.
If $g$ is also $C^1$ and has a $C^1$ inverse, then we say $g$ is a \emph{diffeomorphism} and that the flows $\Phi$ and $\Psi$ are \emph{diffeomorphic}.
Observe that every diffeomorphism is also a homeomorphism, and thus every pair of diffeomorphic flows are also homeomorphic.
Homeomorphic (resp.~diffeomorphic) flows satisfy a strong, and typical, notion of equivalence between dynamical systems.
Intuitively, two dynamical systems are homeomorphic if their trajectories can be mapped to one another by stretching and bending space.

\subsection{Turing Machines}\label{subsec:TM_prelim}
Throughout this paper we rely crucially on the notion of a Turing complete dynamical systems, i.e.~a dynamical system able to simulate any Turing machine.
We will briefly recall the Turing machine model and formalize its relationship with dynamical systems.

A \emph{Turing machine} is given by a tuple $T = \left(Q,\Sigma,\delta,q_0,q_{\text{halt}} \right)$ where
\begin{itemize}
    \item $Q$ is a finite \emph{set of states}, including an \emph{initial state} $q_0$ and a \emph{halting state} $q_{\text{halt}}$;
    \item $\Sigma$ is an \emph{alphabet} with cardinality at least two;
    \item $\delta: Q \times \Sigma \to Q \times \Sigma \times \{-1,0,1\}$ is a \emph{transition function}.
\end{itemize}
For a given Turing machine $T$ and an \emph{input tape} $s = (s_i)_{i \in \Z} \in \Sigma^{\Z}$, the Turing machine's computation is carried out according to the following process:
\begin{itemize}[leftmargin=4em,itemsep=0pt]
    \item[{[0]}] Initialize the \emph{current state} $q$ to $q_0$, and the current tape $w = (w_i)_{i \in \Z}$ to be $s$.
    \item[{[1]}] If $q = q_{\text{halt}}$ then halt the algorithm and return $w$ as output.
    Otherwise compute $\delta(q,w_0) = (q',w'_0,\sigma)$, where $\sigma \in \{-1,0,1\}$.
    \item[{[2]}] Update the current state and tape by setting $q = q'$ and the $0^{\text{th}}$ position of $w$ to $w_0 = w'_0$.
    \item[{[3]}] Update $w$ with the $\sigma$ shifted tape $(w_{i + \sigma})$, then return to [1].
\end{itemize}

Without loss of generality, we will assume that Turing machines adhere to standard simplifying conventions (cf.~\cite{sipser1996intro}).
Specifically, we assume that the alphabet $\Sigma = \{0,\dots,9\}$ and any given tape of the Turing machine only has a finite number of symbols different from $0$, where $0$ represents the special ``blank symbol''.
Under these assumptions it follows that there exists a finite (possibly large) integer $k_0 > 0$ such that any tape $w$ satisfies
\begin{equation}\label{eq:finite_tape}
   w = (w_i)_{i \in \Z} = \dots 0 0 0 w_{-k_0} \dots w_{k_0} 0 0 0 \dots
\end{equation}
with each $w_i \in \Sigma$.
Equivalently, at any given step in the Turing machine's evolution, these assumptions ensure there can be at most $2 k_0 + 1$ non-blank symbols on the tape.
In particular, we get that the space of configurations of a Turing machine $T$ is $Q \times A \subset Q \times \Sigma^{\Z}$, where $A$ is the subset of strings taking the form~\eqref{eq:finite_tape}.

The construction of dynamical systems that simulate Turing machines is at the heart of our results, and has been studied for various problems in physics~\citep{Reif1994ComputabilityAC,Freedman98pnp,cardona2021constructing}.
Although equivalent definitions exist, our analyses will adopt the formalisms used by recent work on fluid dynamics~\citep{cardona2021turing,tao2017universality}.
An analogous definition can be given for flows on a manifold.

\begin{definition}\label{def:turing_sim}
A vector field $X$ on a manifold $M$ \emph{simulates} a Turing machine $T$ if there exists an explicitly constructible open set $U_{w_{-k}, \dots, w_k} \subset M$ corresponding to each finite string $w_{-k}, \dots, w_k \in \Sigma$, and an explicitly constructible point $p_s \in M$ corresponding to each $s \in \Sigma^{\Z}$, such that: 
$T$ with input tape $s$ halts with an output tape having values $w_{-k}, \dots, w_k$ in positions $-k, \dots, k$ respectively if and only if the trajectory of $X$ through $p_s$ intersects $U_{w_{-k}, \dots, w_k}$.
\end{definition}

Intuitively, a dynamical system simulates a Turing machine if there is a correspondence between trajectories reaching certain sets and computations halting with certain configurations.
In particular, constructing the point $p_s$ depends only on the Turing machine $T$ and input tape $s$, while constructing the set $U_{w_{-k}, \dots, w_k}$ depends only on the specified halting configuration of $T$.
Both here and throughout the paper, we say a mathematical object (e.g.~points, sets, or matrices) is \emph{constructible} if it can be computed in finite time; constructability is not explicitly used in our arguments, but is important for nuanced technical reasons since it disallows pathological scenarios such as having all information about a machine's computations being encoded in an initial condition.    

Definition~\ref{def:turing_sim} leads to a natural notion of Turing completeness for dynamical systems. 
\begin{definition}\label{def:turing_comp}
A dynamical system is \emph{Turing complete} if it can simulate a universal Turing machine $T$. 
\end{definition}

\section{Turing Complete Dynamics on Matrix Games}\label{sec:RD_TC}
Our goal in this section is to establish the Turing completeness of replicator dynamics; in~\S\ref{subsec:poly_embed} we provide all precursory results required to prove the main result in~\S\ref{subsec:main_RD}.

\subsection{Turing Complete Vector Fields and Approximation-Free Game-Embeddings}\label{subsec:poly_embed}
Our construction of Turing complete game dynamics relies crucially on the notion of \emph{generalized Lotka-Volterra} (GLV) vector fields.
In particular, two properties of GLV vector fields will play a key role in the proof: (i)~\emph{polynomial} vector fields on $\reals^n_{++}$ are a special case of GLV vector fields, and (ii)~GLV vector fields can be embedded into RD on a matrix games \emph{without} approximation.

Formally, a GLV vector field is a vector field on $\reals^n_{++}$ given by the system of ODEs
\begin{equation} \label{eq:general_LV}
    \dot{x}_i = \frac{d x_i}{d t} = x_i \left(\lambda_i +  \sum_{j \in [m]} A_{ij} \prod_{k \in [n]} x^{B_{jk}}_k \right), \qquad i \in [n]
\end{equation}
where $m$ is some positive integer, $\lambda \in \reals^n$, $A \in \reals^{n \times m}$, and $B \in \reals^{m \times n}$.
Since exponents given by $B$ can be any real number, the terms in the parentheses are \emph{multivariate generalized polynomials}.
In special cases where the ODEs are standard multivariate polynomials, GLV vector fields equate to polynomial vector fields---a fact straightforwardly shown by noting that any polynomial vector field $P = \{P_i\}_{i \in [n]}$ on $\reals^n_{++}$ is equivalent to the GLV vector field $\tilde{P} = \{x_i(\tfrac{1}{x_i} P_i)\}_{i \in [n]}$.

Polynomial and GLV vector fields play an integral role by allowing us to invoke recent results by~\cite{cardona2021turing} and~\cite{andrade2021learning}.
The starting point of our construction can stated as follows:
\begin{proposition}[Theorem 4.1 of~\cite{cardona2021turing}]\label{prop:cardona}
There exists a constructible polynomial vector field $X$ of degree $58$ on $\US^{17}$ which is Turing complete and bounded.
\end{proposition}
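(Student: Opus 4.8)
The plan is to reduce to, and then invoke, the construction of Cardona, Miranda, Peralta-Salas, and Presas: the statement is quoted verbatim as Theorem~4.1 of~\cite{cardona2021turing}, so a self-contained proof is beyond our scope, but I would reconstruct the argument along the following lines. Start from a universal Turing machine $T$ and replace it by a functionally equivalent machine whose single computational step is \emph{robust}. Configurations are encoded as points of a Cantor-like subset of a cube $[0,1]^N$, writing the finite tape to the left and right of the head as the digits of two reals and recording the internal state by a discrete marker; the one-step transition then becomes a map $f$ that is locally constant on the encoding set and hence insensitive to small perturbations. This ``analog'' reformulation of computation is standard, going back to Moore and to Koiran--Cosnard--Garzon, and refined by Graça--Campagnolo--Buescu.

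Second, turn iteration of $f$ into the flow of an autonomous polynomial ODE on $\reals^{17}$. Here I would use the Graça--Campagnolo--Buescu technique: build a polynomial vector field whose flow alternates, on each unit-time cycle, between (i) ``computing'' $f$ by driving an auxiliary block of coordinates toward $f$ of the current configuration and (ii) ``error-correcting'' by contracting back toward the encoding set, with the targeting accuracy sharpened from cycle to cycle so that round-off never accumulates. Each elementary gadget---polynomial surrogates for the rounding and step functions, a smooth clock, the contraction maps---is a polynomial of controlled degree, and careful bookkeeping of how these degrees compose yields the explicit bound $58$, while the number of auxiliary coordinates required yields the ambient dimension; the resulting field on $\reals^{17}$ simulates $T$ in the reachability sense of Definition~\ref{def:turing_sim}.

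Third, compactify. Transport the field from $\reals^{17}$ onto an open chart of $\US^{17}$ by a rational diffeomorphism such as inverse stereographic projection, clear denominators (which reparametrizes time but preserves trajectories, hence reachability), and extend by a suitable polynomial field on the complementary region so that the global field on $\US^{17}$ stays polynomial, of the same degree up to an additive constant, and---since $\US^{17}$ is compact---automatically bounded. The encoded trajectories never leave the image of $\reals^{17}$, so simulation survives the compactification. Constructibility is inherited throughout, since every reduction, every gadget, and the compactification map is given by an explicit finite recipe.

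The main obstacle, and the technical heart of the cited work, is the second step: forcing the simulation to be carried out by a \emph{polynomial} vector field---not merely a smooth or analytic one---on a fixed-dimensional space while keeping the degree finite and explicit. This demands that the errors introduced by polynomial approximations of discontinuous maps be provably controllable over infinitely many steps, which is precisely where the closed-form analytic iteration machinery of Graça--Campagnolo--Buescu, together with the delicate degree-and-dimension accounting behind the constants $58$ and $17$, is needed. By comparison the compactification is routine, but it must be performed without destroying polynomiality, which is why a rational chart rather than an arbitrary smooth one is essential.
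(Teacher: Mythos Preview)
Your outline is essentially the same three-stage architecture the paper sketches in Appendix~\ref{a:cardona}: encode configurations of a universal machine as points in Euclidean space, invoke the Gra\c{c}a--Campagnolo--Buescu construction to obtain an autonomous polynomial ODE whose trajectories track the global transition function, then push onto the sphere via inverse stereographic projection with a time reparametrization to keep the field polynomial and bounded. Two small discrepancies are worth flagging. First, the encoding actually used is into $\N^3\subset\reals^3$ (tape halves become the base-$10$ digits of two \emph{integers}, the state is a third integer), not into a Cantor subset of $[0,1]^N$; the integer-lattice encoding is what the Gra\c{c}a construction works with directly, and the open sets $U_{w^*}$ are then $\epsilon$-neighbourhoods of the relevant lattice points. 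Second, the specific constants $58$ and $17$ are not derived by ad-hoc bookkeeping but are quoted from Hainry's analysis~\citep{hainry2009decidability} of the Gra\c{c}a construction; the paper also notes that the vector field is first non-autonomous and then made autonomous by the standard proxy-time-variable trick, which matches your ``smooth clock'' remark. Neither point undermines your plan, but if you want to align with the cited source you should switch to the $\N^3$ encoding and attribute the degree and dimension bounds to Hainry.
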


In Appendix~\ref{a:cardona} we provide a proof sketch of this result;  we refer the reader to~\cite{cardona2021turing} for the full proof.
In~\S\ref{subsec:main_RD} we will extend the Turing completeness from Proposition~\ref{prop:cardona} to replicator dynamics in matrix games by leveraging recent work by~\cite{andrade2021learning}.
In essence, \cite{andrade2021learning} showed that GLV vector fields can approximate essentially any dynamical system, and that any GLV vector field can be embedded into the dynamics of RD on some matrix game.
In this paper we only rely on the latter result, since polynomial vector fields are already a special case of GLV vector fields and thus do not need to be approximated.

\begin{proposition}[Theorem 3 of \cite{andrade2021learning}]\label{prop:glv}
Let $\tilde{P}$ be a GLV vector field on $\reals^n_{++}$ and $\Phi$ be the flow generated by $\tilde{P}$.
For $m \geq n$, there exists a flow $\Theta$ on $\relint{\Delta^m}$ and a constructible diffeomorphism $f:\reals^n_{++} \to \P \subseteq \relint{\Delta^m}$ such that:
\begin{enumerate}[label=(\roman*)]
    \item The flow $\Theta$ on $\relint{\Delta^m}$ is given by RD on a matrix game with payoff matrix $A \in \reals^{m \times m}$.
    \item The flow $\restr{\Theta}{\P} = f(\Gamma)$ and $\Phi = f^{-1}(\restr{\Theta}{\P})$, where $\restr{\Theta}{\P}$ is the flow given by $\Theta$ restricted to $\P$.
    \item The integer $m-1$ is at least the number of unique monomials in $\tilde{P}$.
\end{enumerate}
\end{proposition}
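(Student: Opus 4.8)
\emph{Proof sketch.} The plan is to embed $\tilde P$ into replicator dynamics by two \emph{exact} changes of variables: first a monomial lift that turns $\tilde P$ into a Lotka--Volterra (LV) vector field on a higher-dimensional positive orthant, and then the classical Hofbauer correspondence that turns an LV field on $\reals^N_{++}$ into RD on $\Delta^{N+1}$. What distinguishes GLV (hence polynomial) fields from arbitrary dynamical systems---which in \cite{andrade2021learning} can only be \emph{approximated}---is that generalized monomials are closed under products, so the lift is a genuine conjugacy onto an invariant submanifold, with no error term. First write the data of $\tilde P$ in \eqref{eq:general_LV} as $\lambda\in\reals^n$, $B\in\reals^{m\times n}$, and coefficient matrix $\hat A\in\reals^{n\times m}$ (renamed to avoid clashing with the replicator payoff matrix $A$ we are constructing), with the $m$ monomials $\mu_j = \prod_k x_k^{B_{jk}}$ taken distinct.

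\textbf{Step 1 (monomial lift).} Enlarging the monomial list by at most $n$ entries, assume it also contains the coordinate monomials $x_1,\dots,x_n$; write $M$ for its new length, so that $M$ is still at least the number of monomials of $\tilde P$ and $M\ge n$, and correspondingly pad $\hat A$ with zero columns. On $\reals^M_{++}$ define the field $W$ by $\dot z_l = z_l\big((B\lambda)_l + \sum_{j\in[M]}(B\hat A)_{lj}\,z_j\big)$, which is LV since the bracket is affine in $z$. Let $g_0\colon\reals^n_{++}\to\reals^M_{++}$, $g_0(x)=(\mu_1(x),\dots,\mu_M(x))$; it is a constructible embedding, injective because $x_1,\dots,x_n$ are among the $\mu_l$. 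A one-line computation---using only $\partial_{x_k}\mu_l = (B_{lk}/x_k)\,\mu_l$ and $\tilde P_k = x_k(\lambda_k+\sum_j \hat A_{kj}\mu_j)$---gives $dg_0|_x\big(\tilde P(x)\big) = W\big(g_0(x)\big)$ for every $x$. Hence $\P_0 := g_0(\reals^n_{++})$ is forward invariant for $W$, and $g_0$ conjugates $\tilde P$ to $\restr{W}{\P_0}$.

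\textbf{Step 2 (LV into RD).} Put $N = M$ and let $g_1\colon\reals^N_{++}\to\relint{\Delta^{N+1}}$, $g_1(z) = \big(1+\sum_l z_l\big)^{-1}(z_1,\dots,z_N,1)$; this is a constructible diffeomorphism onto $\relint{\Delta^{N+1}}$. Let $A\in\reals^{(N+1)\times(N+1)}$ be the matrix whose upper-left $N\times N$ block is $B\hat A$, whose first $N$ entries in column $N+1$ are $(B\lambda)_1,\dots,(B\lambda)_N$, and whose last row is zero. The classical substitution $z_i = p_i/p_{N+1}$ (for simplex coordinates $p$) then shows that RD on $\Delta^{N+1}$ with this $A$ is, via $g_1$, conjugate to $W$ up to multiplication of the field by the positive function $z\mapsto\big(1+\sum_l z_l\big)^{-1}$, i.e.\ up to a positive time reparametrization. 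Set $\Theta$ to be the flow of this RD, $f = g_1\circ g_0$, and $\P = f(\reals^n_{++}) = g_1(\P_0)\subseteq\relint{\Delta^{N+1}}$. Then $m := N+1 = M+1 \ge n+1 > n$ and $m-1 = M$ is at least the number of monomials of $\tilde P$, which is~(iii); $\relint{\Delta^m}$ is forward invariant under RD and $\P$ is forward invariant since $\P_0$ is, giving the first part of~(i); and $f$ is a constructible diffeomorphism carrying $\Phi$ to $\restr{\Theta}{\P}$, which is~(ii) and completes~(i), modulo the reparametrization.

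\textbf{Main obstacle.} The real content is the exactness of Step~1---that $\P_0$ is genuinely invariant and $\restr{W}{\P_0}$ reproduces $\tilde P$ with no remainder. This hinges entirely on monomials being closed under products, which is precisely what makes $\dot z_l/z_l$ affine in $z$ along $\P_0$; once that identity is checked, everything else is explicit formula-writing, which simultaneously certifies constructibility. The one residual wrinkle is the state-dependent positive time reparametrization (by $(1+\sum_l z_l)^{-1}$) coming from Step~2: it preserves orbits, so it does not affect the reachability relation of Definition~\ref{def:turing_sim} that is ultimately needed downstream, and it can otherwise be removed by absorbing it into a rescaling of the constructed payoff data.
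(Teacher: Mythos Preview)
Your proof is correct and follows exactly the approach the paper itself sketches: the Brenig monomial lift (your Step~1) composed with the Hofbauer--Sigmund LV-to-RD correspondence (your Step~2, their Theorem~7.5.1). The paper gives only that one-sentence outline and defers to \cite{andrade2021learning} and \cite{hofbauer1998book}; you have filled in the details faithfully, including the monomial-count bookkeeping for~(iii) and the correct observation that the residual positive time reparametrization from Step~2 is immaterial for the reachability statement used downstream.
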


At a high level, proving Proposition~\ref{prop:glv} boils down to composing an embedding trick introduced by~\cite{brenig1989universal} with Theorem~$7.5.1$ by~\cite{hofbauer1998book}.
The relationship highlighted here between $m-1$ and the number of monomials was not included in the original statement by~\cite{andrade2021learning}, however it is shown as part of an important step in their proof and is required for Corollary~\ref{cor:bounding_degree}.

\subsection{Replicator Dynamics on Matrix Games is Turing Complete}\label{subsec:main_RD}
To prove the main result of this section, Theorem~\ref{thm:main}, we will apply Proposition~\ref{prop:glv} on a diffeomorphism of the Turing complete vector field constructed in Proposition~\ref{prop:cardona}.

\begin{theorem}\label{thm:main}
There exists $m \geq 0$ and a constructible matrix game $A \in \reals^{m \times m}$ such that replicator dynamics on $A$ is Turing complete. 
\end{theorem}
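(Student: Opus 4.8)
The plan is to chain together the two propositions already in hand. By Proposition~\ref{prop:cardona}, there is a constructible, bounded, Turing complete polynomial vector field $X$ of degree $58$ on the sphere $\US^{17}$. The obstacle is that Proposition~\ref{prop:glv} operates on GLV vector fields living on the open positive orthant $\reals^n_{++}$, not on a sphere. So the first step is to transport $X$ from $\US^{17}$ to $\reals^{17}_{++}$ (or some $\reals^n_{++}$) via a constructible diffeomorphism. A natural choice is to first restrict to an open hemisphere (since $X$ is bounded and comes from a dynamical system whose relevant trajectories stay in an explicitly describable region, one can arrange the computation-carrying part of the sphere to sit in such a chart), then apply a stereographic-type projection to $\reals^{17}$, and finally a diffeomorphism $\reals^{17} \to \reals^{17}_{++}$ such as the coordinatewise exponential $x \mapsto (e^{x_1}, \dots, e^{x_{17}})$. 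Composing these yields a constructible diffeomorphism $g: \US^{17} \supseteq W \to \reals^{17}_{++}$, and pushing $X$ forward through $g$ gives a vector field $\tilde P$ on $\reals^{17}_{++}$ whose flow is diffeomorphic to that of $X$ on $W$.

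The second step is to check that $\tilde P$ is a GLV vector field, so that Proposition~\ref{prop:glv} applies. Pushing a polynomial vector field forward through a diffeomorphism does not in general keep it polynomial, but GLV vector fields are closed under the relevant coordinate changes: in particular, under the exponential change of coordinates, polynomial (indeed, rational or generalized-polynomial) vector fields on $\reals^n_{++}$ become GLV vector fields, since $x_i = e^{u_i}$ turns $\dot u_i = (\text{generalized polynomial in } e^u)$ into $\dot x_i = x_i \cdot (\text{generalized polynomial in } x)$, which is exactly the form of eq.~\eqref{eq:general_LV}. The stereographic projection contributes denominators of the form $(1 + \|y\|^2)$, which are themselves generalized polynomials, so the composite remains within the GLV class (using that the exponents $B_{jk}$ in eq.~\eqref{eq:general_LV} are allowed to be arbitrary reals, and observing that $\reals^n_{++}$ is where these expressions are well defined). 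This is the step I expect to require the most care: verifying that every coordinate change in the chain preserves membership in the GLV class, and keeping track of the fact that $\tilde P$ is constructible (its defining data $\lambda, A, B$ are computed in finite time from those of $X$ and $g$).

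The third step is mechanical: apply Proposition~\ref{prop:glv} to $\tilde P$ on $\reals^{17}_{++}$. This yields an integer $m$ with $m-1$ at least the number of unique monomials of $\tilde P$, a constructible payoff matrix $A \in \reals^{m \times m}$, a flow $\Theta$ on $\relint{\Delta^m}$ given by replicator dynamics on $A$, and a constructible diffeomorphism $f: \reals^{17}_{++} \to \P \subseteq \relint{\Delta^m}$ conjugating the flow of $\tilde P$ to $\restr{\Theta}{\P}$. Composing all the conjugacies, the flow of the original Turing complete vector field $X$ on $\US^{17}$ (restricted to $W$) is diffeomorphic, via the constructible map $f \circ g$, to the replicator flow $\Theta$ restricted to the forward-invariant set $\P$.

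The final step is to transfer the simulation property (Definition~\ref{def:turing_sim}) across this conjugacy. Given a Turing machine $T$, Proposition~\ref{prop:cardona} supplies, for each input tape $s$, a constructible point $p_s \in \US^{17}$, and for each finite output string $w_{-k},\dots,w_k$, a constructible open set $U_{w_{-k},\dots,w_k} \subset \US^{17}$, such that $T$ on input $s$ halts with that output iff the trajectory of $X$ through $p_s$ meets $U_{w_{-k},\dots,w_k}$. I would define the corresponding data for $\Theta$ by pushing forward: $p'_s := (f\circ g)(p_s) \in \relint{\Delta^m}$ and $U'_{w_{-k},\dots,w_k} := (f\circ g)(U_{w_{-k},\dots,w_k}) \cap \P$, which are open (homeomorphisms are open maps) and constructible (compositions of constructible maps). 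Since $f\circ g$ conjugates the two flows on $W$ and $\P$, and since $\P$ is forward-invariant so trajectories starting in $\P$ never leave it, the trajectory of $\Theta$ through $p'_s$ intersects $U'_{w_{-k},\dots,w_k}$ if and only if the trajectory of $X$ through $p_s$ intersects $U_{w_{-k},\dots,w_k}$, which holds iff $T$ on input $s$ halts with output $w_{-k},\dots,w_k$. Hence replicator dynamics on $A$ simulates $T$; taking $T$ to be a universal Turing machine and invoking Definition~\ref{def:turing_comp} gives the theorem. The one point needing a remark is that Definition~\ref{def:turing_sim} asks for open sets and points in the ambient space $M = \Delta^m$ (or $\relint{\Delta^m}$): since $\P$ is open in $\relint{\Delta^m}$, the sets $U'$ are open in the ambient manifold as well, so no extra work is needed there.
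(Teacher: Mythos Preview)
Your proposal has a genuine gap at step two: the pushforward of $X$ through stereographic projection followed by the coordinatewise exponential is \emph{not} a GLV vector field. After stereographic projection the field becomes rational in the chart variable $y\in\reals^{17}$, with denominators such as $(1+\|y\|^2)^k$. Writing $x_i=e^{y_i}$, the resulting field on $\reals^{17}_{++}$ has the form $\dot x_i = x_i\cdot R_i(\log x_1,\dots,\log x_{17})$ with $R_i$ rational. But the parenthesised factor in a GLV system~\eqref{eq:general_LV} must be a \emph{generalized polynomial} in $x$, i.e.\ a finite sum of monomials $\prod_k x_k^{B_{jk}}$; in the $y$-variables this is a finite linear combination of exponentials $e^{\langle B_j,y\rangle}$. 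A nontrivial rational function of $y$ is never such a finite exponential sum, so $\tilde P$ falls outside the scope of Proposition~\ref{prop:glv}. Your sentence ``denominators of the form $(1+\|y\|^2)$, which are themselves generalized polynomials'' conflates polynomials in $y$ with generalized polynomials in $x=e^y$; after the exponential change of variables these are very different objects.

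The paper sidesteps this by exploiting that $X$ is polynomial in the \emph{ambient} coordinates of $\reals^{18}$ and tangent to $\US^{17}$. Rather than charting the sphere intrinsically, it extends $X$ to a polynomial field $\bar X$ on all of $\reals^{18}$ by adding the radial term $x_i(1-\|\x\|_2^2)$, which makes $\US^{17}$ globally attracting while keeping the field polynomial, and then applies an affine translation $F_\sigma$ to push the (bounded) dynamics into $\reals^{18}_{++}$. Affine maps preserve polynomiality, and polynomial fields on $\reals^{18}_{++}$ are automatically GLV via $\dot x_i = x_i\cdot\bigl(\tfrac{1}{x_i}P_i(\x)\bigr)$, so Proposition~\ref{prop:glv} applies directly. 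The global-attraction device also eliminates any need to argue that the computation-carrying trajectories remain inside a single hemispherical chart---a claim you assert but do not justify, and which is not obvious since non-halting trajectories of the underlying construction may be unbounded before the stereographic compactification. A smaller point: your closing remark that $\P$ is open in $\relint{\Delta^m}$ is generally false, since $\dim\P=n$ while $\dim\relint{\Delta^m}=m-1$ is typically much larger; the simulation lives on a lower-dimensional invariant submanifold, and the open sets in Definition~\ref{def:turing_sim} should be read relative to that submanifold.
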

\begin{proof}
Let $X$ be the Turing complete polynomial vector field on $\US^{17}$ given by Proposition~\ref{prop:cardona}.
We begin by embedding $X$ into a polynomial vector field $\bar{X}$ on $\reals^{18}$ where $\US^{17}$ is globally attracting.
Since trajectories of $\bar{X}$ are globally attracted to $\US^{17}$, a standard change of coordinates via translation yields a polynomial vector field that is well-defined on $\reals^{18}_{++}$.
Therefore, as polynomial vector fields on $\reals^{18}_{++}$ are a special case of GLV vector fields, we will conclude the proof by applying Proposition~\ref{prop:glv} from~\S\ref{subsec:poly_embed}.

Let $\{\phi_i\}_{i \in [18]}$ be the set of polynomials given by $X$.
Define $\pi(\x) = (1 - \|\x\|_2^2)$ for $\x \in \reals^{18}$.
Now define $\bar{X}$ as the vector field on $\reals^{18}$ given by the system 
\begin{align*}
    \dot{x}_i &= x_i \left(\pi(\x) + \frac{1}{x_i} \phi_i(\x)\right)\\
    &= x_i\pi(\x) + \phi_i(\x)~,
\end{align*}
for each $i \in [18]$.
By construction $\US^{17}$ is forward invariant under $\bar{X}$, as $\pi(\x) = 0$ on $\US^{17}$ and $X$ is forward invariant on $\US^{17}$.
Furthermore, observe that for $\x = \x(t) \in \reals^{18}$ the solutions of $\bar{X}$ satisfy
\begin{align*}
    \frac{d}{dt}\|\x\|_2^2 &= 2 \sum_{i \in [18]} x_i \dot{x}_i\\
    &= 2 \left( \sum_{i \in [18]} x_i^2 \pi(\x) + \sum_{i \in [18]} x_i \phi_i(\x) \right)\\
    &= 2 \pi(\x) \left(\sum_{i \in [18]} x_i^2 \right) + 2 \left(\sum_{i \in [18]} x_i \phi_i(\x) \right)\\
    &= 2 \pi(\x) \|\x\|_2^2\\
    &= 2 \|\x\|_2^2 \left(1 - \|\x\|_2^2\right)~,
\end{align*}
since, by definition of $T\US^{17}$, the constraint $\|\x\|_2^2 = 1$ ensures $X$ satisfies
\begin{equation*}
    2 \sum_{i \in [18]} x_i \phi_i(\x) = 0~.
\end{equation*}
The term $2 \|\x\|_2^2 \left(1 - \|\x\|_2^2\right)$ is a logistic equation in $\|\x\|_2^2$.
Thus, for every $\x \in \reals^{18}$, we know $\|\x\|_2^2 \to 1$ as $t \to \infty$.
It follows that $\US^{17}$ is globally attracting for the trajectories generated by $\bar{X}$.

Denote a standard translation of axes by $\sigma \in \reals$ as $F_\sigma: \reals^{18} \to \reals^{18}$, $\x \mapsto \x + \sigma\One$, where $\One$ is the all-ones vector.
Since solutions of $\bar{X}$ are attracted to $\US^{17}$ and Proposition~\ref{prop:cardona} ensures $\{\phi_i\}_{i \in [18]}$ is bounded due to the reparametrization done in~eq.~$(4.2)$ in~\cite{cardona2021turing}, there exists suitable values of $\sigma$ such that composing $F_\sigma$ with $\bar{X}$ yields a polynomial vector field that is forward invariant on $\reals^{18}_{++}$.
Formally, let $B > 0$ be the bound on $\{\phi_i\}_{i \in [18]}$ given in Proposition~\ref{prop:cardona}, i.e.\ for all $i \in [18]$ and $\x \in \reals^{18}$ the vector field $X$ satisfies $|\phi_i(\x)| \leq B$.
To ensure the translated vector field is forward invariant on $\reals^{18}_{++}$, it suffices to find $\sigma$ such that $Y = F_\sigma \circ \bar{X}$ is strictly positive on the boundary when $\y \in \reals^{18}_{++}$ has $y_i = 0$ for some $i \in [18]$.
By definition we know that $Y$ at any $\y \in \reals^{18}_{++}$ is identical to $\bar{X}$ at $\x = \y - \sigma\One$.
The system of equations $\{\dot{y}_i\}_{i \in [18]}$ is given by the system of equations $\{\dot{x}_i\}_{i \in [18]}$ under the substitution $\x = \y - \sigma\One$.
Therefore we find that, for $\y \in \reals^{18}_{++}$ with $y_i = 0$ for some $i \in [18]$, 
\begin{align*}
    \dot{y}_i 
    &= (y_i - \sigma) \pi(\y - \sigma\One) + \phi_i(\y - \sigma\One)\\
    &\geq (- \sigma) (1- \|\y - \sigma\One\|_2^2) - B\\
    &= \sigma \|\y - \sigma\One\|_2^2 - \sigma - B\\
    &\geq \sigma^3 - \sigma - B~,
\end{align*}
which implies $\dot{y}_i > 0$ whenever $B < -\sigma + \sigma^3$.
Thus, for values of $\sigma$ satisfying $B < -\sigma + \sigma^3$, we have $Y = F_\sigma \circ \bar{X}$ which is well defined on $\reals^{18}_{++}$ for all initial conditions in $\reals^{18}_{++}$.

By definition of $Y$, as a translated copy of $\bar{X}$, the set $F_\sigma(\US^{17})$ is globally attracting in $Y$, and $\restr{Y}{F_\sigma(\US^{17})}$ is a Turing complete polynomial vector field.
It follows we have constructed a polynomial vector on $\reals^{18}_{++}$ that inherits the Turing complete dynamics of $X$.
Since polynomial vector fields on $\reals^{18}_{++}$ are a special case of GLV vector fields on $\reals^{18}_{++}$, from Proposition~\ref{prop:glv} there exists a diffeomorphism $f: \reals^{18}_{++} \to \P \subseteq \relint{\Delta^m}$ from trajectories of $Y$ onto trajectories of an invariant submanifold of replicator dynamics on a matrix game $A \in \reals^{m \times m}$.

We conclude by showing how the Turing completeness of $X$ corresponds to Turing completeness for replicator dynamics on $A$.
Suppose we have a given Turing machine $T$, an input tape $s$, and some finite string $\omega$.
By Proposition~\ref{prop:cardona} there exists a point $p_s$ and open set $U_\omega$ such that trajectories of $X$ through $p_s$ intersect $U_\omega$ if and only if $T$ halts with input $s$ and output matching $\omega$ about the machine's head.
Our analysis above shows that $\restr{\bar{X}}{\US^{17}} = X$, so trajectories of $\bar{X}$ through $p_s$ intersect $U_\omega$ if and only if $T$ halts with input $s$ and output matching $\omega$.
Therefore, after translating $\bar{X}$, we know trajectories of $Y$ through $F_\sigma(p_s)$ intersect $F_\sigma(U_\omega)$ if and only if $T$ halts with input $s$ and output matching $\omega$.
Finally, since diffeomorphisms are closed under composition, we conclude that trajectories of replicator dynamics on $A$ through the point $f(F_\sigma(p_s))$ intersect the set $f(F_\sigma(U_\omega))$ if and only if $T$ halts with input $s$ and output matching $\omega$, where $f$ is the diffeomorphism above.
Thus, on an invariant submanifold of $\relint{\Delta^m}$, replicator dynamics on $A$ simulates $T$.
Taking $T$ to be a universal Turing machine completes the proof.
\end{proof}

An interesting corollary of Theorem~\ref{thm:main} is that we arrive at a bound on the number of actions needed for defining games where learning dynamics can be Turing complete.
The bound is likely loose for several reasons.
Firstly, the polynomial vector field from Proposition~\ref{prop:cardona} is not known to have minimal degree nor dimension.
Secondly, the combinatorial argument in~Appendix~\ref{a:cor1} makes no attempt at a nuanced count on the number of unique monomials in the polynomials given by these vector fields.
Deriving a tight bound is not only an interesting open question for game dynamics, but also for recent work in fluid dynamics~\citep{cardona2021turing,lizaur2021chaos} and analog computing~\citep{hainry2009decidability}.

\begin{corollary}\label{cor:bounding_degree}
For some $m \leq {76 \choose 18}+1$, there exists a matrix game $A \in \reals^{m \times m}$ such that replicator dynamics on $A$ is Turing complete.
\end{corollary}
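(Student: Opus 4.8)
The plan is to trace through the dimensions and degrees in the proof of Theorem~\ref{thm:main} and apply item~(iii) of Proposition~\ref{prop:glv}, which says that $m-1$ is at least the number of unique monomials in the GLV vector field being embedded. So the task reduces to bounding the number of unique monomials in the polynomial vector field $Y = F_\sigma \circ \bar{X}$ on $\reals^{18}_{++}$ constructed during the proof of Theorem~\ref{thm:main}.

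First I would record the relevant parameters: by Proposition~\ref{prop:cardona}, $X$ is a polynomial vector field of degree $58$ on $\US^{17}$, hence in $18$ variables. The passage to $\bar{X}$ adds the term $x_i \pi(\x) = x_i(1 - \|\x\|_2^2)$ to each component $\phi_i$; since $\phi_i$ has degree $\le 58$ and $x_i(1-\|\x\|_2^2)$ has degree $3$, the degree of $\bar{X}$ is still at most $58$ (one must note $58 \ge 3$). Translating by $F_\sigma$, i.e.\ substituting $\x = \y - \sigma\One$, is an affine change of variables, which does not increase the degree. So $Y$ is a polynomial vector field of degree at most $58$ in $18$ variables. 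When we convert $Y$ to GLV form via $\tilde{P}_i = x_i(\tfrac{1}{x_i}P_i)$, the monomials of $\tilde P$ are exactly the monomials appearing across all the $\tfrac{1}{x_i} P_i$; a crude bound is simply the total number of monomials of degree at most $58$ in $18$ variables (dividing each $P_i$ by $x_i$ cannot create monomials outside this set, and in fact lowers degrees, so this over-counts safely).

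Next I would count: the number of monomials of degree at most $d$ in $n$ variables is $\binom{n+d}{d} = \binom{n+d}{n}$. With $n = 18$ and $d = 58$ this is $\binom{76}{58} = \binom{76}{18}$. Hence the number of unique monomials in $\tilde{P}$ is at most $\binom{76}{18}$, so by Proposition~\ref{prop:glv}(iii) we may take $m - 1 \le \binom{76}{18}$, i.e.\ $m \le \binom{76}{18} + 1$. Combined with Theorem~\ref{thm:main}, which produces a Turing complete matrix game of this size $m$, the corollary follows.

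The main obstacle — really the only subtlety — is making sure the monomial count is honest about two things: that the affine translation $F_\sigma$ genuinely does not inflate the degree or introduce variables beyond the original $18$, and that the division by $x_i$ in the GLV reparametrization does not produce negative exponents (it does not, since every term of a replicator-type vector field component $\dot{x}_i$ already carries a factor of $x_i$, which is preserved under the operations performed; more directly, one can just bound the monomials of $P_i$ itself and note $\tfrac1{x_i}P_i$ has a subset of the "shapes" after shifting one exponent down). Since the corollary explicitly disclaims tightness, I would not optimize — the clean bound $\binom{76}{18}$ as "monomials of degree $\le 58$ in $18$ variables" is exactly what the statement asks for, and the argument is essentially a one-paragraph bookkeeping exercise once Proposition~\ref{prop:glv}(iii) is in hand.
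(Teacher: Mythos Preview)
Your proposal is correct and follows essentially the same route as the paper: trace the degree ($58$) and dimension ($18$ variables) through the construction in Theorem~\ref{thm:main}, invoke Proposition~\ref{prop:glv}(iii), and bound the number of unique monomials by the total count $\binom{76}{18}$ of monomials of degree at most $58$ in $18$ variables. The paper's argument is slightly terser---it simply notes that passing from $X$ to $\bar{X}$ to $Y$ only introduces monomials (those in $1-\|\x\|_2^2$ and constants from the translation) already counted in $\binom{76}{18}$---whereas you argue via degree preservation under each step, but the content is the same.
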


\section{Undecidable Phenomena in No-Regret Learning Dynamics}\label{sec:undecidable}
The Turing completeness of replicator dynamics (i.e.~Theorem~\ref{thm:main}) has deep implications for machine learning and, more generally, learning in strategic environments.
Specifically, if a dynamical system simulates a Turing machine, Definition~\ref{def:turing_sim} gives a reduction from the halting problem for Turing machines to the reachability problem for dynamical systems, which we use alongside the Turing completeness established in Theorem~\ref{thm:main} to uncover the existence of undecidable reachability problems.
As will be discussed in~\S\ref{subsec:undecidable_imps}, the existence of undecidable problems makes it increasingly important that we understand computability in instances of reachability arising from fundamental solution concepts for game theory and machine learning.

\subsection{The Halting and Reachability Problems}\label{subsec:undecidable_probs}
The halting problem is a prototypical decision problem for Turing machines and is arguably the most famous undecidable problem in computer science.
Given a Turing machine $T$ and an input tape, the \emph{halting problem for $T$} asks whether or not $T$ will halt.
By contrast, the reachability problem is canonical for dynamical systems and has been studied in various control settings; given a dynamical system $X$ and a set of initial conditions, the \emph{reachability problem for $X$} asks whether or not $X$'s trajectory will intersect a predetermined set.
Although the computability of the halting problem is generally well understood in Turing machines, the computability of the reachability problem has not traditionally been studied in the context of game dynamics.
However, from the strong equivalence between halting and reachability required by Definition~\ref{def:turing_sim}, we immediately get a reduction between these classic decision problems.

\begin{proposition}\label{prop:reduction}
If a dynamical system $X$ simulates a Turing machine $T$, then the halting problem for $T$ reduces to the reachability problem for $X$.
\end{proposition}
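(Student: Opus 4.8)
The plan is to unwind Definition~\ref{def:turing_sim} and observe that it already packages exactly the ingredients needed for a many-one reduction. Recall that a reduction from the halting problem for $T$ to the reachability problem for $X$ is a computable map sending each instance of the former to an instance of the latter with matching yes/no answers. So first I would fix the Turing machine $T$ that $X$ simulates, and take an arbitrary instance of the halting problem for $T$, namely an input tape $s \in \Sigma^{\Z}$ (in fact of the finite form~\eqref{eq:finite_tape}, so it is a finite object).

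Next I would use Definition~\ref{def:turing_sim} to produce the target instance of the reachability problem. The definition hands us, from $s$, the constructible point $p_s \in M$, and for any chosen finite output string $w_{-k},\dots,w_k$ it hands us the constructible open set $U_{w_{-k},\dots,w_k} \subseteq M$. The key point is that ``constructible'' means exactly ``computable in finite time,'' so $s \mapsto (p_s, U_{w_{-k},\dots,w_k})$ is a computable map --- this is precisely why the excerpt stresses constructibility as a technical hygiene condition. To make the reduction clean I would pick a canonical halting configuration (e.g.\ the empty output about the head, or simply quantify over a single fixed halting tape), so that ``$T$ with input $s$ halts'' is equivalent to ``$T$ with input $s$ halts with that output''; then the instance of reachability is: does the trajectory of $X$ through $p_s$ ever intersect $U_{w_{-k},\dots,w_k}$?

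Finally I would invoke the ``if and only if'' in Definition~\ref{def:turing_sim} directly: $T$ halts on $s$ iff the trajectory of $X$ through $p_s$ intersects $U_{w_{-k},\dots,w_k}$. Hence the computable map $s \mapsto (p_s, U_{w_{-k},\dots,w_k})$ is a valid many-one reduction from the halting problem for $T$ to the reachability problem for $X$, which is what we wanted.

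I do not expect a serious obstacle here --- the statement is essentially a restatement of the definition of ``simulates,'' and the only thing to be careful about is bookkeeping: making sure the halting problem and the reachability problem are phrased as genuine decision problems with instances (so that ``reduces to'' has content), and making sure the choice of target set $U_{w_{-k},\dots,w_k}$ is handled uniformly so that the map is total and computable. The mild subtlety is that Definition~\ref{def:turing_sim} speaks of halting \emph{with a specified output}, whereas the plain halting problem does not mention output; bridging that gap by fixing (or, if one prefers, existentially closing over) the output configuration is the one place where a word of justification is needed, but it is routine.
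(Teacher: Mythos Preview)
Your proposal is correct and takes essentially the same approach as the paper: both simply unwind Definition~\ref{def:turing_sim} and read off the reduction from the built-in ``if and only if.'' The paper's own argument is in fact a single sentence to this effect, so your more careful bookkeeping---in particular flagging the mismatch between ``halts with specified output'' in Definition~\ref{def:turing_sim} and the plain halting problem, and noting that it is resolved by fixing a canonical output (e.g.\ the empty string about the head)---already goes beyond what the paper spells out.
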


The proof of this proposition follows directly from Definition~\ref{def:turing_sim}, since checking whether the dynamical system reaches a set becomes equivalent to checking whether the Turing machine halts by definition.
From Theorem~\ref{thm:main} we know that replicator dynamics on a matrix game can simulate a universal Turing machine.
Therefore, due to the undecidability of the halting problem in general, we deduce that the reachability problem can be undecidable for replicator dynamics on matrix games.

\begin{corollary}\label{cor:undecidable}
There exist matrix games where reachability is undecidable for replicator dynamics.
\end{corollary}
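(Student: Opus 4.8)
The plan is to derive Corollary~\ref{cor:undecidable} as an immediate consequence of Theorem~\ref{thm:main} together with Proposition~\ref{prop:reduction} and the classical undecidability of the halting problem. First I would invoke Theorem~\ref{thm:main} to obtain a concrete matrix game $A \in \reals^{m \times m}$ on which replicator dynamics simulates a universal Turing machine $T_u$, in the sense of Definition~\ref{def:turing_sim}. Since $T_u$ is universal, the halting problem for $T_u$ (taking arbitrary pairs $\langle M, x\rangle$ as encoded inputs) is undecidable.

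Next I would apply Proposition~\ref{prop:reduction} to the dynamical system given by replicator dynamics on $A$: because this system simulates $T_u$, the halting problem for $T_u$ reduces to the reachability problem for replicator dynamics on $A$. Concretely, given an input tape $s$ encoding $\langle M, x\rangle$, one forms the constructible initial condition $p_s = f(F_\sigma(p_s))$ and the constructible target set $f(F_\sigma(U_\omega))$ for a halting-configuration string $\omega$ (as produced in the proof of Theorem~\ref{thm:main}); the trajectory through this initial point reaches this set if and only if $T_u$ halts on $s$. A decision procedure for the reachability problem on $A$ would therefore yield a decision procedure for the halting problem of $T_u$, contradicting its undecidability. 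Hence reachability for replicator dynamics on $A$ is undecidable, and $A$ is the desired matrix game.

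There is no real obstacle here — the content is entirely in Theorem~\ref{thm:main}, and the corollary is a routine composition of that result with a standard reduction. The only point requiring minor care is making sure the reduction is effective in the right direction: one must note that the map from a halting instance $\langle M, x\rangle$ to the corresponding reachability instance $(A, p_s, f(F_\sigma(U_\omega)))$ is itself computable, which is exactly what the constructibility clauses in Definition~\ref{def:turing_sim} and in Proposition~\ref{prop:cardona} and Proposition~\ref{prop:glv} guarantee. With that observation in place, undecidability of the halting problem transfers directly, completing the argument.
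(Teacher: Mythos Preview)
Your proposal is correct and matches the paper's own argument: the paper likewise derives the corollary immediately from Theorem~\ref{thm:main}, Proposition~\ref{prop:reduction}, and the undecidability of the halting problem for a universal Turing machine. One small notational slip to fix: writing ``$p_s = f(F_\sigma(p_s))$'' overloads the symbol; you mean that the initial condition in the simplex is $f(F_\sigma(p_s))$, where $p_s$ is the sphere point from Proposition~\ref{prop:cardona}.
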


The corollary follows immediately from Proposition~\ref{prop:reduction} and Theorem~\ref{thm:main}, since the undecidability of the halting problem for universal Turing machines uncovers the undecidability of the reachability problem for replicator dynamics on matrix games.

\subsection{Implications for No-Regret Learning in Games}\label{subsec:undecidable_imps}
Games are primarily understood and studied via equilibrium concepts, e.g.~Nash equilibria, evolutionary stable strategies, and coarse correlated equilibria.
It is therefore unsurprising that the goal of learning in games is often to converge on some set of equilibria.
Yet, beyond certain special cases~(e.g.~potential games), learning behaviours remain largely enigmatic and there has been limited progress towards resolving non-convergence in general settings.
The results in this paper may explain why: determining convergence to a set of equilibria is a special case of reachability, and identifying learning algorithms that provably converge on such a set can be an undecidable problem even in very simple classes of games.
The goal of this section is to formalize this intuition.

In Corollary~\ref{cor:undecidable} we found that reachability can be undecidable for replicator dynamics on matrix games.
Therefore, taken as a negative result, Corollary~\ref{cor:undecidable} implies that undecidable trajectories can exist in larger classes of game dynamics where replicator dynamics on matrix games is a special case.
Unfortunately, replicator dynamics is special case of FTRL dynamics and no-regret learning dynamics more generally~\citep{mertikopoulos2018cycles}, which suggests these popular learning dynamics can inherit the negative result on any class of games containing matrix games.
Similarly, matrix games are very restricted and a special case of many popular classes of games, e.g.~normal form and smooth games.
As an example of how broadly these results generalize, matrix games in the FTRL framework describe quadratic objective functions and thus undecidable trajectories exist for optimization-driven learning over quadratic objectives.
Thus, as Corollary~\ref{cor:undecidable} holds for replicator dynamics on matrix games, we arrive at the reachability problem being generally undecidable for rich classes of game dynamics studied in the literature and used in practice.
\begin{corollary}\label{cor:genral_undecidable}
There exist games where reachability for no-regret learning dynamics is undecidable.
\end{corollary}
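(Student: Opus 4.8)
The plan is to deduce Corollary~\ref{cor:genral_undecidable} from Corollary~\ref{cor:undecidable} by exhibiting replicator dynamics on matrix games as a special case of no-regret learning dynamics and then invoking closure of undecidability under the relevant reduction. First I would recall from~\S\ref{subsec:learning_prelim} that replicator dynamics on a single matrix game $A$ coincides with the FTRL dynamics of eq.~\eqref{eqn:FTRL} with entropic regularizer $h_i(\x_i) = \sum_{s_i} x_{is_i} \ln x_{is_i}$, played in the symmetric two-player game $A_{i,j} = A_{j,i} = A$ with symmetric initial conditions $\x_i(0) = \x_j(0)$; the excerpt already observes that under these assumptions the two players' trajectories agree and solve eq.~\eqref{eq:replicator}. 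Since FTRL with any strongly convex regularizer has bounded regret in arbitrary games in continuous time~\citep{mertikopoulos2018cycles}, this particular dynamic is a \emph{bona fide} no-regret learning dynamic on a game.

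The key step is then the reduction itself. Let $A \in \reals^{m \times m}$ be the matrix game from Corollary~\ref{cor:undecidable} on which replicator dynamics is Turing complete, hence has an undecidable reachability problem by Proposition~\ref{prop:reduction} and Theorem~\ref{thm:main}. I would package $A$ as an instance of no-regret learning: the symmetric two-player game with payoff matrices $A_{1,2} = A_{2,1} = A$, both players running entropic FTRL from a common initial payoff vector $\y_1(0) = \y_2(0)$ chosen so that $\x_1(0) = \x_2(0)$ is the initial condition $f(F_\sigma(p_s))$ from the proof of Theorem~\ref{thm:main}. The reachability set is the image $f(F_\sigma(U_\omega))$ pulled into the product strategy space (e.g.\ as $f(F_\sigma(U_\omega)) \times \relint{\Delta^m}$, or its diagonal trace). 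Because the induced trajectory in strategy space is exactly the replicator trajectory on $A$, deciding whether this no-regret learning dynamic reaches the prescribed set is equivalent to deciding reachability for replicator dynamics on $A$, which is undecidable. Thus there exists a game --- indeed a symmetric matrix game in the FTRL framework --- on which reachability for no-regret learning dynamics is undecidable.

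The main obstacle, and the thing I would be most careful about, is not mathematical depth but faithfully matching the formal objects so the reduction is legitimate: one must check that the ``dynamical system'' associated to no-regret learning on this game --- a flow on $\relint{\Delta^m} \times \relint{\Delta^m}$ (or the cumulative-payoff space $\reals^m \times \reals^m$) driven by eq.~\eqref{eqn:FTRL} --- genuinely contains the replicator flow on $A$ as a restriction to an invariant submanifold (the symmetric diagonal), so that constructibility of $p_s$, $U_\omega$ and $A$ is preserved under the embedding and the equivalence ``$T$ halts iff the learning trajectory reaches the set'' holds on the nose. Once the symmetry reduction of eq.~\eqref{eqn:FTRL} to eq.~\eqref{eq:replicator} is spelled out --- exactly as flagged in~\S\ref{subsec:learning_prelim} --- this is routine, and the broader claim about normal form games, smooth games, and quadratic objectives follows by the same argument since each of these classes contains matrix games and each admits replicator/entropic-FTRL as an instance of its no-regret dynamics.
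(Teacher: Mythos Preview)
Your proposal is correct and follows essentially the same approach as the paper: the paper's proof is just the one-line observation that replicator dynamics on matrix games is a special case of no-regret (FTRL) learning on a larger class of games, so Corollary~\ref{cor:undecidable} transfers directly. If anything, you are more careful than the paper itself, which does not spell out the diagonal embedding or the constructibility-preservation point; the paper simply invokes the discussion in~\S\ref{subsec:undecidable_imps} and the specialization noted in~\S\ref{subsec:learning_prelim}.
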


In light of Corollary~\ref{cor:undecidable}, the claim follows from our discussion above.
As determining convergence to sets of game theoretic solution concepts is a special case of the reachability problem, Corollary~\ref{cor:genral_undecidable} reveals that determining whether game dynamics converge to fundamental solution concepts is undecidable in general.
It is important to note that the undecidability may not hold for specific games or learning dynamics; the primary take-away is that undecidability is possible and has strong implications about how we should approach these important questions.

\section{Discrete Learning Dynamics and Turing Machine Simulations}\label{sec:discretization}
Thus far we have focused on the continuous-time replicator learning dynamics, but in practice discrete-time learning dynamics are typically used.
A folk result in the study of game dynamics states that the multiplicative weights update (MWU) algorithm is essentially an Euler discretization of replicator dynamics.
It is therefore natural to ask whether MWU, the discrete analogue of replicator dynamics, are also Turing complete.
Unfortunately, as will be shown in this section, standard numerical error analyses are likely insufficient for proving Turing completeness in discrete time; intuitively, the reason is because discretizations of a continuous time process will yield error bounds that grow as a function of time.
We will formalize these error bounds in~\S\ref{subsec:MWU} and use them in~\S\ref{subsec:discrete_sims} to begin untangling the computational power of MWU.
Discussions of related open questions are left for~\S\ref{sec:conclusion}.

\subsection{Discretization Error of Multiplicative Weights Updates}\label{subsec:MWU}
The fact that MWU is a discretization of replicator dynamics is well known in the field of game dynamics, but a precise derivation of this relationship is often omitted.
For clarity in our analysis of discretization errors, we will highlight one possible discretization that reveals MWU as a discrete-analogue of replicator dynamics in Appendix~\ref{a:MWU_to_rep}.
The discretization we arrive at is used to find a bound on the cumulative error of MWU relative to replicator dynamics, which is crucial for the analyses and discussion to follow.

\begin{lemma}\label{lem:LE}
Let $\Phi$ be the flow generated by replicator dynamics and $\x^{t}$ be the mixed strategy found on the $t^{\text{th}}$ iterate of MWU. 
The error accrued by a single iteration of MWU with step-size $\eta > 0$ is
\begin{equation*}
    \|\x^{t+1} - \Phi(\eta, \x^{t})\|_{\infty} \leq  1 - e^{-2\eta}~.
\end{equation*}
\end{lemma}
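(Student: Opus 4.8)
The plan is to write one step of MWU and the time-$\eta$ replicator flow from the \emph{same} point $\x^t$ in a common normalized-exponential form, and then control their discrepancy coordinatewise. First I would record the two closed forms. Using the discretization derived in Appendix~\ref{a:MWU_to_rep}, the MWU iterate is $x_i^{t+1} = x_i^t e^{\eta (A\x^t)_i}\big/\sum_j x_j^t e^{\eta (A\x^t)_j}$. For replicator dynamics, integrating $\tfrac{d}{dt}\ln x_i = (A\x)_i - \x^\Trans A\x$ over $[0,\eta]$ and observing that the term $\x^\Trans A\x$ is common to all coordinates yields the exact solution $\Phi(\eta,\x^t)_i = x_i^t e^{b_i}\big/\sum_j x_j^t e^{b_j}$, where $b_i := \int_0^\eta (A\x(s))_i\,ds$ and $\x(\cdot)$ is the replicator trajectory with $\x(0)=\x^t$. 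Under the standard normalization of payoffs to $[0,1]$ (which is exactly what makes $1-e^{-2\eta}$ the natural constant), both exponents $a_i := \eta(A\x^t)_i$ and $b_i$ lie in $[0,\eta]$, since $\x^t$ and every $\x(s)$ are probability vectors.

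Next I would establish a two-sided ratio bound. Writing $\frac{x_i^{t+1}}{\Phi(\eta,\x^t)_i} = e^{a_i - b_i}\cdot\frac{\sum_j x_j^t e^{b_j}}{\sum_j x_j^t e^{a_j}}$, the factor $e^{a_i-b_i}$ lies in $[e^{-\eta},e^{\eta}]$ because $a_i,b_i\in[0,\eta]$; substituting $e^{b_j}=e^{a_j}e^{b_j-a_j}$ with $b_j-a_j\in[-\eta,\eta]$ shows the ratio of normalizers also lies in $[e^{-\eta},e^{\eta}]$. Multiplying these gives $e^{-2\eta} \le x_i^{t+1}/\Phi(\eta,\x^t)_i \le e^{2\eta}$ for every coordinate $i$.

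Finally I would convert this multiplicative estimate into the claimed additive bound. Both $\x^{t+1}$ and $\Phi(\eta,\x^t)$ are probability vectors, so each coordinate lies in $[0,1]$. Fix $i$ and assume without loss of generality $x_i^{t+1}\ge\Phi(\eta,\x^t)_i$; the ratio bound gives $\Phi(\eta,\x^t)_i \ge e^{-2\eta}x_i^{t+1}$, whence $x_i^{t+1}-\Phi(\eta,\x^t)_i \le (1-e^{-2\eta})x_i^{t+1} \le 1-e^{-2\eta}$, and the reverse case is symmetric. Taking the maximum over $i$ gives $\|\x^{t+1}-\Phi(\eta,\x^t)\|_\infty \le 1-e^{-2\eta}$.

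There is no deep obstacle here; the work is in getting two routine ingredients exactly right. The first is pinning down the precise MWU discretization and the closed-form replicator solution so their exponents sit inside a common window of length $\eta$ — this is where the $[0,1]$ payoff normalization is essential, and where the factor of $2$ in $e^{-2\eta}$ comes from (one factor of $e^{\pm\eta}$ from the numerators and one from the normalizers). The second, slightly more delicate point, is extracting the sharp constant $1-e^{-2\eta}$ rather than the weaker $e^{2\eta}-1$ that a naive estimate produces: this requires combining the two-sided ratio bound with the fact that the entries are bounded by $1$, and choosing the correct side of the inequality in the without-loss-of-generality step.
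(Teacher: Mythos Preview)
Your proposal is correct and follows essentially the same route as the paper: express both MWU and the time-$\eta$ replicator flow in normalized-exponential form, use boundedness of the expected payoffs to sandwich the coordinatewise ratio in $[e^{-2\eta},e^{2\eta}]$, and then convert this multiplicative bound to the additive one using that entries lie in $[0,1]$. The only notable difference is that you normalize payoffs to $[0,1]$ rather than the paper's $[-1,1]$, which in fact makes the window-of-length-$\eta$ step (and hence the constant $1-e^{-2\eta}$) come out more cleanly, and your final WLOG conversion is spelled out a bit more carefully than the paper's.
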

The proof of Lemma~\ref{lem:LE} consists of relatively straightforward calculations, but requires carefully handling nonlinearities introduced by MWU; a full proof is included in Appendix~\ref{a:LE}.

Using Lemma~\ref{lem:LE} as a basis, we can bound the error accrued over multiple iterations of MWU.\footnote{In the language of numerical analysis, Lemma~\ref{lem:LE} gives the local error used to find the global error in Lemma~\ref{lem:GE}.}

\begin{lemma}\label{lem:GE}
Let $\Phi$ be the flow generated by replicator dynamics and $\x^{t}$ be the mixed strategy found on the $t^{\text{th}}$ iterate of MWU.
The error accrued after $t+1$ iterations of MWU with step-size $\eta > 0$ is
\begin{equation*}
    \|\x^{t+1} - \Phi(t\eta, \x^{0})\| \leq \mathcal{O}\left(e^{t}\right)~.
\end{equation*}
\end{lemma}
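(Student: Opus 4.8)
The plan is to bootstrap from the single-step (local) error bound in Lemma~\ref{lem:LE} to a multi-step (global) error bound using a standard Gr\"onwall-type accumulation argument from numerical analysis. Let $e^t := \|\x^t - \Phi(t\eta, \x^0)\|$ denote the global error after $t$ steps. First I would split this error into two contributions: the local error committed at step $t$ (the discrepancy between one MWU step and one flow step, started from the same point), and the propagated error, namely how the flow amplifies the already-accumulated error $e^t$ over a time interval of length $\eta$. Concretely, by the triangle inequality,
\begin{equation*}
    \|\x^{t+1} - \Phi((t+1)\eta, \x^0)\| \leq \|\x^{t+1} - \Phi(\eta, \x^t)\| + \|\Phi(\eta, \x^t) - \Phi(\eta, \Phi(t\eta,\x^0))\|~.
\end{equation*}
The first term is bounded by $1 - e^{-2\eta}$ via Lemma~\ref{lem:LE}. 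For the second term I would invoke the Lipschitz dependence of the replicator flow on its initial condition: since the replicator vector field is smooth with a bounded Lipschitz constant $L$ on the (compact) simplex, the flow map $\Phi(\eta, \cdot)$ is Lipschitz with constant $e^{L\eta}$, so the second term is at most $e^{L\eta} e^t$.

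This yields the recursion $e^{t+1} \leq e^{L\eta} e^t + (1 - e^{-2\eta})$ with $e^0 = 0$. Unrolling the geometric recursion gives $e^t \leq (1 - e^{-2\eta}) \sum_{k=0}^{t-1} e^{kL\eta} = (1-e^{-2\eta}) \frac{e^{tL\eta}-1}{e^{L\eta}-1}$. Treating $\eta$ (and hence $L\eta$) as a fixed constant, the dominant factor is $e^{tL\eta}$, which is $\mathcal{O}(e^{t})$ up to the constant $L\eta$ absorbed into the exponent's base; more carefully, one can write the bound as $\mathcal{O}(C^t)$ for a constant $C = e^{L\eta}$ and then note that the claimed statement as written hides constants inside the $\mathcal{O}$, so $\mathcal{O}(e^t)$ is the intended (slightly informal) form. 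I would state the Lipschitz constant $L$ of replicator dynamics on $\Delta^m$ explicitly (it is bounded in terms of $\|A\|$), or simply cite the standard fact that smooth vector fields on compact sets generate flows that are Lipschitz in initial conditions, to keep the argument self-contained.

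The main obstacle, and the place requiring the most care, is the treatment of constants and the precise meaning of the $\mathcal{O}(e^t)$ notation. The honest bound has the form $(1 - e^{-2\eta}) \cdot \frac{e^{tL\eta} - 1}{e^{L\eta} - 1}$, whose growth rate in $t$ is $e^{tL\eta}$, not $e^t$; the statement as written is presumably using $\mathcal{O}(e^t)$ loosely to mean ``exponential in $t$ with rate depending on the fixed step-size and game,'' which is exactly the qualitative message needed for the discussion in~\S\ref{subsec:discrete_sims} (error grows exponentially in the number of iterations, hence cannot be controlled uniformly over unbounded time). I would therefore either (a) phrase the conclusion as ``$\mathcal{O}(e^{tL\eta})$ for the Lipschitz constant $L$ of replicator dynamics,'' or (b) keep $\mathcal{O}(e^t)$ but explicitly note that the base of the exponential absorbs the $\eta$- and $A$-dependent constants. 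A secondary, more minor point is verifying that the iterates $\x^t$ and the flow $\Phi$ both remain in the compact domain $\Delta^m$ (indeed its relative interior) so that a uniform Lipschitz constant exists along the trajectory; this follows because both MWU and replicator dynamics preserve the simplex, which I would note in passing.
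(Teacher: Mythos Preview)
Your proposal is correct and essentially identical to the paper's own proof: the paper also splits via the triangle inequality, bounds the local term by $1-e^{-2\eta}$ using Lemma~\ref{lem:LE}, bounds the propagated term by $e^{\eta L}E^t$ using Lipschitz continuity of the flow on the compact simplex, and then applies the discrete Gr\"onwall lemma to obtain $E^{t+1}\le(1-e^{-2\eta})\frac{e^{(t+1)\eta L}-1}{e^{\eta L}-1}$. Your remarks on the informal meaning of $\mathcal{O}(e^t)$ and on simplex invariance are apt additions the paper leaves implicit.
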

A full derivation of Lemma~\ref{lem:GE} is found in Appendix~\ref{a:GE}, and follows from using Lemma~\ref{lem:LE} alongside standard techniques for bounding error in iterated numerical methods.

\subsection{Simulating Bounded Turing Machines with Multiplicative Weights Update}\label{subsec:discrete_sims}
The result in Lemma~\ref{lem:GE} shows that, relative to replicator dynamics, the error accrued by MWU will grow with the number of iterations.
Error growing as a function of time is problematic when simulating a Turing machine by using MWU as discretization of replicator dynamics.

Recall that in Theorem~\ref{thm:main} we showed that replicator dynamics can simulate a universal Turing machine because it can embed a dynamical system that simulates a universal Turing machine, which is done to ensure the Turing machine's halting remains equivalent to the dynamics' trajectories reaching a certain set.
However, in general, determining whether such a Turing machine will halt or how many steps are required to halt is undecidable.
Therefore, without an a priori bound on the maximum amount of time needed to determine whether the machine halts or not, we cannot choose step sizes for MWU that guarantee the discretization remains sufficiently close to replicator dynamics when intersecting the relevant set.

\begin{theorem}\label{thm:bounded_time}
Let $k > 0$ be a finite integer and $\mathbb{T}_{k}$ be the set of Turing machines that we can determine to halt or not after $k$ steps of computation.
For any $k$, there exists step sizes $\eta > 0$ such that MWU with step-size $\eta$ can simulate any Turing machine in $\mathbb{T}_k$. 
\end{theorem}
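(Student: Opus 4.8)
The plan is to discretize the continuous-time construction behind Theorem~\ref{thm:main}, exploiting the fact that membership in $\mathbb{T}_k$ caps the time horizon over which the discretization must stay faithful to replicator dynamics. Fix the matrix game $A$ and the constructible data from Theorem~\ref{thm:main}: for a universal machine $T$ on input $s$, the replicator flow $\Phi$ on $\relint{\Delta^m}$ started at $\x^0 := f(F_\sigma(p_s))$ meets the open set $U_\omega$ iff $T$ halts with output $\omega$. First I would record a \emph{finite-horizon} version of this for bounded computations. Because the construction behind Proposition~\ref{prop:cardona} is explicit and runs the simulated machine step by step, and because the diffeomorphism $f$ and translation $F_\sigma$ in the proof of Theorem~\ref{thm:main} preserve the time parameter, there is a computable $T_k$ such that for every $T \in \mathbb{T}_k$ the trajectory segment $\{\Phi(t,\x^0) : 0 \le t \le T_k\}$ already decides whether $T$ halts with output $\omega$: a halting machine in $\mathbb{T}_k$ halts within a bounded number of flow ``steps'' and hence reaches $U_\omega$ by time $T_k$, while a non-halting machine in $\mathbb{T}_k$ (whose non-halting we know) never reaches any relevant $U_\omega$ at all. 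I would moreover extract from the construction a \emph{uniform} geometric margin $\delta_k > 0$: halting trajectories enter $U_\omega$ at an encoded configuration point lying at distance $\ge \delta_k$ inside $U_\omega$, and over $[0,T_k]$ the non-halting trajectories stay at distance $\ge \delta_k$ from every relevant $U_\omega$ --- such a $\delta_k$ exists and is computable from $k$ (and $A$) since a bounded computation with a bounded output query touches only finitely many sets $U_\omega$.

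Next I would quantify how well MWU tracks $\Phi$ over this horizon. Running MWU on $A$ with step size $\eta$ for $N_k := \lceil T_k / \eta \rceil$ iterations, the per-step estimate of Lemma~\ref{lem:LE} --- bounded by $1 - e^{-2\eta} = \mathcal{O}(\eta)$, and in fact $\mathcal{O}(\eta^2)$ since one MWU step agrees with the time-$\eta$ replicator map to first order in $\eta$ --- accumulates, after amplification by the Lipschitz constant of replicator dynamics on the compact set $\relint{\Delta^m}$, to a global error
\[
  \max_{0 \le n \le N_k} \big\| \x^{n} - \Phi(n\eta, \x^{0}) \big\| \;\le\; C_k\,\eta ,
\]
with $C_k$ depending only on $k$ and $A$; the exponential growth in the iteration count flagged by Lemma~\ref{lem:GE} is present but harmless, since $N_k$ --- and hence the amplification factor $e^{\mathcal{O}(N_k\eta)} = e^{\mathcal{O}(T_k)}$ --- is fixed once $k$ is. Choosing $\eta = \eta(k) \le \delta_k / (2 C_k)$ keeps the entire MWU orbit $(\x^n)_{0 \le n \le N_k}$ within $\delta_k/2$ of the replicator orbit.

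Finally I would transfer the simulation. Keep the initial point $\x^0 = f(F_\sigma(p_s))$ and replace each $U_\omega$ by the shrunken set $U_\omega' := \{\, x : B(x, \delta_k/2) \subseteq U_\omega \,\}$, which is open, explicitly constructible, and still contains the configuration points by the margin property. By the tracking bound, the MWU orbit $(\x^n)_{0 \le n \le N_k}$ meets $U_\omega'$ iff $\Phi(\cdot,\x^0)$ meets $U_\omega$ on $[0,T_k]$, i.e.\ iff $T$ halts with output $\omega$; and since $T \in \mathbb{T}_k$, iterating MWU beyond $N_k$ steps cannot overturn this verdict. Hence MWU with the single step size $\eta(k)$, the points $\x^0$, and the sets $U_\omega'$ simulates $T$ in the sense of Definition~\ref{def:turing_sim}, and taking $T$ universal gives simulation of every machine in $\mathbb{T}_k$.

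The main obstacle is the first step: turning ``halting is decided after $k$ steps'' into a \emph{computable} continuous-time horizon $T_k$ together with a margin $\delta_k$ that is uniform over all of $\mathbb{T}_k$ simultaneously. This is not a black-box consequence of Theorem~\ref{thm:main}; it relies on structural features of the vector field of Proposition~\ref{prop:cardona} --- that each Turing step costs a bounded amount of flow time, that configurations are encoded as points lying well inside the sets $U_\omega$, and that a bounded output query involves only finitely many such sets --- which must be traced through the reductions in the proof of Theorem~\ref{thm:main}. The remaining steps are routine numerical analysis; the only conceptual point there is that boundedness of the machine is exactly what lets the fixed amplification $e^{\mathcal{O}(T_k)}$ be absorbed into a sufficiently small $\eta$, which is precisely where the argument fails to extend to arbitrary Turing machines.
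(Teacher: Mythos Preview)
Your approach is the paper's own --- its entire argument for Theorem~\ref{thm:bounded_time} is the single sentence invoking the open-set construction of Proposition~\ref{prop:cardona} together with Lemma~\ref{lem:GE} to keep the discretization error small over a finite window --- and you have unpacked this correctly into a horizon $T_k$, a margin $\delta_k$, and a tracking bound, while rightly locating the real work in extracting $T_k$ and $\delta_k$ from the explicit encoding sketched in Appendix~\ref{a:cardona}.

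One point you make in passing is in fact load-bearing and worth flagging: your upgrade of the per-step error from the $1-e^{-2\eta}=\mathcal{O}(\eta)$ of Lemma~\ref{lem:LE} to $\mathcal{O}(\eta^2)$ is essential, not cosmetic. With only an $\mathcal{O}(\eta)$ local error, the bound from the proof of Lemma~\ref{lem:GE} over a \emph{fixed} real-time window $T_k=N_k\eta$ reads $(1-e^{-2\eta})\,(e^{T_kL}-1)/(e^{\eta L}-1)\to (2/L)(e^{T_kL}-1)$ as $\eta\to 0$, which does \emph{not} vanish; you need the genuine Euler local-truncation order --- available via the $\mathcal{O}(\eta^2)$ bound on $\y$ noted in Appendix~\ref{a:MWU_to_rep} pushed through the smooth logit map --- to obtain the $C_k\eta$ global bound you use. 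The paper's sketch elides this, so here you are actually more careful than the source.

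Two small loose ends. First, $\relint{\Delta^m}$ is not compact; take the Lipschitz constant on $\Delta^m$. Second, your assertion that iterating MWU beyond $N_k$ ``cannot overturn the verdict'' is unsupported for the non-halting direction of Definition~\ref{def:turing_sim}: nothing prevents the discrete orbit from wandering into some $U_\omega'$ at iterate $N_k+1$. Either stipulate that the MWU simulation is run for exactly $N_k$ iterations, or observe that for $T\in\mathbb{T}_k$ the non-halting side of the biconditional is known a priori and need not be witnessed by the dynamics.
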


The result follows from the construction of the open sets used in Proposition~\ref{prop:cardona} and the fact that we can ensure MWU's discretization error stays sufficiently small over any finite window of time due to Lemma~\ref{lem:GE}.
Resolving the limitations of Theorem~\ref{thm:bounded_time}, and uncovering the true computational power of discrete algorithms such as MWU, will likely require new technical approaches for bounding errors or simulating Turing machines.

\section{Conclusion}\label{sec:conclusion}
We have shown that replicator dynamics in matrix games can simulate universal Turing machines.
In continuous time, this observation was extended to provide deeper insight into the complexities of game theoretic learning.
In fact, as highlighted in~\S\ref{sec:undecidable}, the plurality of negative results on game dynamics can be understood as a natural byproduct of Theorem~\ref{cor:genral_undecidable}.
Given that the present paper uses replicator dynamics specifically and matrix games broadly, complimenting the results given here with analyses based on other learning dynamics and classes of games could be instrumental in guiding future research by finding settings where designing well-behaved game dynamics is a tractable problem.
As was done for Turing machines in computational complexity theory and becomes more natural given the techniques used in our analyses, compartmentalizing the complexity of learning in games using traditional complexity classes suggests a promising line of investigation for finding tractable settings for learning in games.

\begin{figure}[t]
  \centering
  \includegraphics[width=\textwidth]{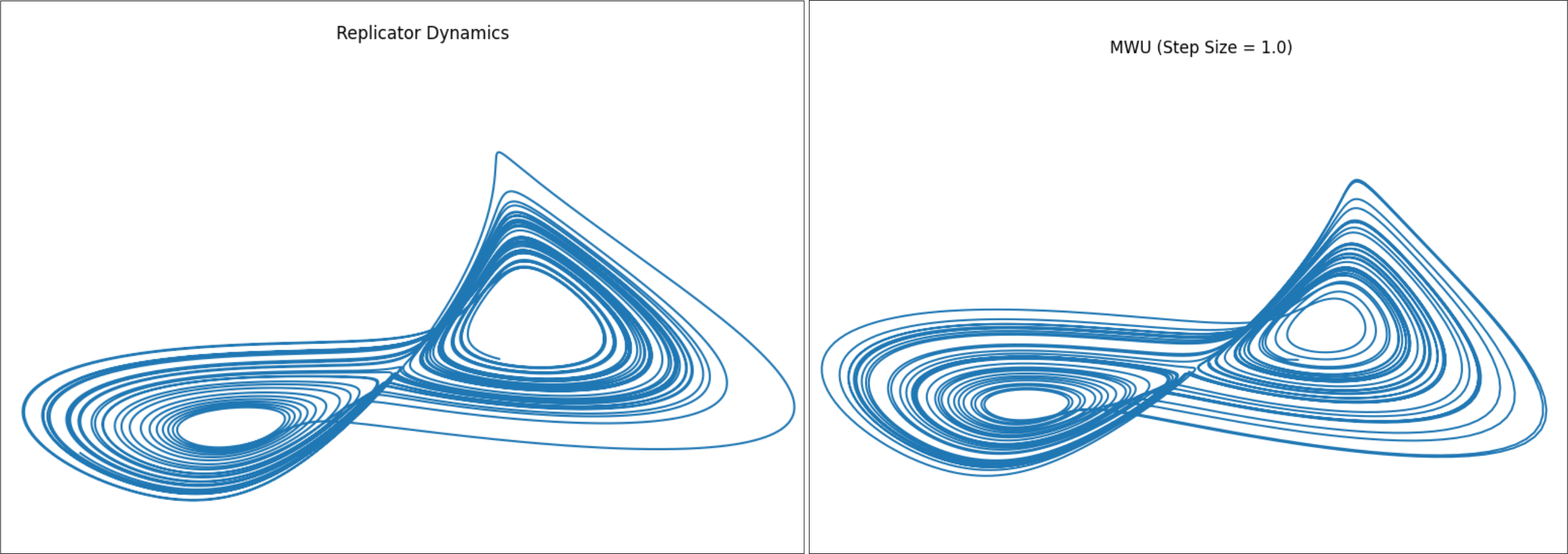}
  \caption{A comparison of replicator dynamics and MWU on a matrix game derived by~\cite{andrade2021learning} to simulate a chaotic dynamical system.
  On the left is replicator dynamics with the dynamics embedded into its behaviours, whereas on the right we have $10000$ iterations of MWU with a relatively large step size.
  Although not identical, it is clear that MWU retains the intricate and complex behaviours of replicator dynamics.}
  \label{fig:shadowing}
\end{figure}

In discrete time, the Turing completeness of replicator dynamics was used to show that MWU can simulate bounded Turing machines.
However, our approach does not disallow for the possibility of MWU being Turing complete as well; using MWU's relationship to replicator dynamics seems to have inherent numerical limitations arising from error growing with time. 
Since discrete-time learning is more applicable in practice, it remains an important open question to determine whether MWU and other discrete learning algorithms are Turing complete.
That being said, the smoothness constraints on continuous-time learning often leads to better behaved dynamics than discrete-analogues, and thus the study of continuous dynamics generally serves as restricted special case of what is possible in discrete-time.
As evidence of this claim, not only are complex dynamic phenomena prevalent in low dimensional discrete systems where it is impossible in continuous systems~(e.g.~chaos~\citep{Thip18}), but Figure~\ref{fig:shadowing} demonstrates the robustness of MWU by showing it can follow replicator dynamics on a matrix game derived by~\cite{andrade2021learning} in order to emulate the iconic Lorenz strange attractor.
In future work, instead of using continuous learning dynamics as a proxy, directly simulating Turing machines with discrete dynamics may provide powerful tools for learning in games.
Research on Turing machine simulations using physical systems has a rich history and encompasses far more than what is discussed in this paper.
Various techniques have been used to directly simulate Turing machines using discrete dynamics~\citep{moore1990unpredictability,siegelmann1992computational}, and insights from this prior work may hold potent insights for applications to learning in games.

\section*{Acknowledgments}
We thank Joshua Grochow for their insights, discussions, and references.
This research-project is supported in part by the National Research Foundation, Singapore under NRF 2018 Fellowship NRF-NRFF2018-07, AI Singapore Program (AISG Award No: AISG2-RP-2020-016), NRF2019-NRF-ANR095 ALIAS grant, AME Programmatic Fund (Grant No. A20H6b0151) from the Agency for Science, Technology and Research (A*STAR), grant PIE-SGP-AI-2018-01 and Provost's Chair Professorship grant RGEPPV2101.
This material is based upon work supported by the National Science Foundation under Grant No. IIS-2045347.

\bibliographystyle{plainnat}
\bibliography{refs}

\appendix

\section{Turing Complete Polynomial Flows on $\US^{17}$}\label{a:cardona}
We will briefly sketch the construction by~\cite{cardona2021turing} of the Turing complete polynomial vector field in Proposition~\ref{prop:cardona}; for a complete treatment we refer the reader to~\cite{cardona2021turing}.
To simplify notation, throughout this section we will represent a step in a Turing machine $T$'s evolution~(i.e.~an iteration of Steps 1--3 in~\S\ref{subsec:TM_prelim}) by the global transition function
\begin{equation*}
    G_T: Q \times A \to Q \times A~,
\end{equation*}
where we set $G_T(q_{\text{halt}}, w) \coloneqq (q_{\text{halt}}, w)$ for any tape $w$.

Let $T = \left(Q,\Sigma,\delta,q_0,q_{\text{halt}} \right)$ be a Turing machine.
We begin by encoding each configuration of $T$ as a constructible point in $\reals^3$. 
Let $r = |Q|$ be the cardinality of the set of states $Q$, then we will represent the elements of $Q$ by $[r] = \{1, \dots, r\} \in \N$.
Since we know tapes satisfy eq.~\eqref{eq:finite_tape}, we can encode any such $w = (w_i)_{i \in \Z}$ as the pair of points in $\N^2$ given by 
\begin{align*}
    y_1 &= w_0 + w_1 10 + \dots + w_{k_0} 10^{k_0}\\
    y_2 &= w_{-1} + w_{-2} 10 + \dots + w_{-k_0} 10^{k_0 - 1}~.
\end{align*}
Taken together, we have an encoding of every $(q,w) \in Q \times A$ as $(y_1,y_2,q) \in \N^3 \subset \reals^3$.
Define $\zeta: Q \times A \to \N^3$ as the map assigning each configuration in $Q \times A$ its associated point in $\N^3$ that we constructed.
The global transition function $G_T$ can now be reinterpreted as a map from $\zeta(Q \times A) \subset \N^3$ to itself.
By extending said map to be the identity map on points in $\N^3 \setminus \zeta(Q \times A)$, we arrive at a map on the whole of $\N^3$ to itself---for simplicity, we will denote this extended map by $G_{\zeta(T)}: \N^3 \to \N^3$.

Using this encoding, the next step in the construction is to simulate $T$ using a polynomial vector field $P$ on $\reals^{n+3}$.
To this end, a modification of a construction by~\cite{graca2008computability} is given.
Specifically,~\cite{graca2008computability} construct a non-autonomous polynomial vector field that simulates $T$, and this vector field is made autonomous via a standard trick of introducing a proxy variable in place of the explicit dependence on time.
Let $P$ on $\reals^{n+3}$ be the \emph{autonomous} polynomial vector field derived via this modification.
The construction by~\cite{graca2008computability} also shows how, given an input tape $s \in A$, a point $p_s = \left(\zeta(q_0,s), \tilde{y}_0\right) \in \reals^{n+3}$ is constructed so that the trajectory of $P$ starting from $p_s$ will simulate $G_{\zeta(T)}$.
The term $\zeta(q_0,s) \in \reals^3$ is defined above and the term $\tilde{y}_0 \in \reals^n$ is from a composition of polynomials depending only on $T$ and $s$---neither of these points are affected by the modification and can be taken as is.
The group property of flows ensures that any trajectory passing through $p_s$ is equivalent to a trajectory ending at and then ``restarting'' from $p_s$, so we can assume $p_s$ is an initial condition in Definition~\ref{def:turing_comp} without loss of generality.
Suppose we have a finite string $w^* = (w^*_{-k}, \dots, w^*_k)$ of symbols in $\Sigma$, we will now construct the set $U_{w^*}$ in Definition~\ref{def:turing_comp}.\footnote{For brevity we will brush over the construction of this set on the component corresponding to the proxy variable for time. Technically this component should be a union of small open intervals for each $i \in \N$, which intuitively associates a rough length of time in the dynamical system with a step in the Turing machine. However, formally introducing this portion of the construction is not particularly insightful since the relevance to the proof is rather tautological due to the proxy variable monotonically increasing at the same constant rate as time.}
Let $\omega = \{w \in \Sigma^{\Z}~|~w_i = w^*_i~\forall i \in [-k,k]\}$, $\epsilon > 0$ be a small positive constant, and $\restr{\reals^3}{\zeta(q_{\text{halt}}, \omega)}$ be the set of points in $\reals^3$ corresponding to configurations of $T$ of the form $(q_{\text{halt}},w \in \omega)$.
Defining $U_{w^*}^\epsilon \subset \reals^3$ as an $\epsilon$-neighborhood of $\restr{\reals^3}{\zeta(q_{\text{halt}}, \omega)}$ gives the open set
\begin{equation*}
    U_{w^*} \coloneqq U_{w^*}^\epsilon \times \reals^{n}~.
\end{equation*}
Showing $P$ satisfies Definition~\ref{def:turing_comp} with this choice of $p_s$ and $U_{w^*}$ follows from a relatively straightforward argument using properties inherited from the construction by~\cite{graca2008computability}.
Finally, the polynomial vector field $X$ in Proposition~\ref{prop:cardona} is constructed by using the pullback of inverse stereographic projection on a suitable reparametrization of $P$ and taking $T$ to be a universal Turing machine.
The pullback of inverse stereographic projection ensures that $X$ is a polynomial vector field tangent to the sphere and the reparametrization ensures the vector field is bounded.\footnote{Technically $X = \restr{\bar{X}}{\US^{n+4}}$, where $\bar{X}$ is a polynomial vector field on $\reals^{n+5}$ and tangent to $\US^{n+4}$. Similarly, as discussed in the proof of Theorem~$1.3$ by~\cite{cardona2021turing}, the reparametrization ensures the vector field is global because it is bounded.}
The fact that $X$ is well-defined on $\US^{17}$ and has degree $58$ follows from an analysis by~\cite{hainry2009decidability} of the construction by~\cite{graca2008computability}.

\section{Proof of Corollary~\ref{cor:bounding_degree}}\label{a:cor1}
\setcounter{theorem}{5}
\begin{corollary}
For some $m \leq {76 \choose 18}+1$, there exists a matrix game $A \in \reals^{m \times m}$ such that replicator dynamics on $A$ is Turing complete.
\end{corollary}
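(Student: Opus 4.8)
The plan is to rerun the construction in the proof of Theorem~\ref{thm:main} essentially unchanged, but this time bookkeep the ambient dimension, the polynomial degree, and the number of monomials involved, so that the quantitative clause~(iii) of Proposition~\ref{prop:glv} turns into an explicit bound on $m$. Recall that that proof starts from the degree-$58$ Turing complete polynomial vector field $X$ on $\US^{17}$ of Proposition~\ref{prop:cardona}, forms the polynomial vector field $\bar X$ on $\reals^{18}$ with $\US^{17}$ globally attracting, translates it by $F_\sigma \colon \x \mapsto \x + \sigma\One$ into a polynomial vector field $Y$ that is forward invariant on $\reals^{18}_{++}$, and then feeds $Y$ (viewed as a GLV vector field) to Proposition~\ref{prop:glv} to obtain replicator dynamics on a matrix game $A \in \reals^{m \times m}$. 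Everything is already done except pinning down how small $m$ may be taken.

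First I would bound the degree and the dimension. By Proposition~\ref{prop:cardona} each component polynomial $\phi_i$ of $X$ has degree at most $58$, and $\pi(\x) = 1 - \|\x\|_2^2$ has degree $2$, so $\bar X$, with components $\dot x_i = x_i\pi(\x) + \phi_i(\x)$, is a polynomial vector field of degree at most $58$ in $18$ variables; the affine translation $F_\sigma$ preserves degree, so $Y$ is again a polynomial vector field of degree at most $58$ on $\reals^{18}_{++}$.

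Next I would count monomials. In the GLV form $\tilde Y = \{\, y_i\,(\tfrac{1}{y_i}Y_i)\,\}_{i \in [18]}$, every monomial appearing in some $\tfrac{1}{y_i}Y_i$ has the shape $\y^{\mathbf{a}}/y_i$ for a monomial $\y^{\mathbf{a}}$ of the degree-$\le 58$ polynomial $Y_i$ on $18$ variables. Hence the number of distinct monomials $\y^{\mathbf{a}}$ that can occur anywhere in $Y$ — and, coarsely, the number of unique monomials of $\tilde Y$ — is at most the number of monomials of degree at most $58$ in $18$ variables, namely $\binom{58+18}{18} = \binom{76}{18}$. By Proposition~\ref{prop:glv}(iii) we may therefore take $m \le \binom{76}{18} + 1$ (the side constraint $m \ge 18$ being vacuous here).

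Finally, invoking Theorem~\ref{thm:main} with $T$ a universal Turing machine, its proof produces a constructible matrix game $A \in \reals^{m\times m}$, with $m$ as above, on which replicator dynamics is Turing complete; since $m \le \binom{76}{18}+1$ this is exactly the claim. The step I expect to be the real obstacle is the monomial count: making it tight would require carefully controlling how dividing each $Y_i$ by $y_i$ interacts with taking the union over the $18$ coordinate equations (done naively, these operations can push the count past $\binom{76}{18}$), and even then the figures $58$ and $18$ inherited from Proposition~\ref{prop:cardona} are not known to be optimal — so, consistent with the surrounding remarks, the bound is intentionally loose and the coarse estimate above is all that is aimed for.
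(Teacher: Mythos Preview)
Your proposal is correct and follows essentially the same route as the paper: invoke the objects $X$, $\bar X$, $Y$ from the proof of Theorem~\ref{thm:main}, observe that the resulting polynomial vector field on $\reals^{18}_{++}$ has degree at most $58$, bound the number of distinct monomials by the stars-and-bars count $\binom{58+18}{18}=\binom{76}{18}$, and then read off $m\le\binom{76}{18}+1$ from clause~(iii) of Proposition~\ref{prop:glv}. The paper differs only cosmetically, bounding monomials in $X$ first and then arguing that passing to $\bar X$ and $Y$ introduces no new ones, whereas you bound the degree directly at the level of $Y$; your explicit caveat about the $y^{\mathbf a}/y_i$ monomials in the GLV form is a point the paper simply elides under the same ``crude upper bound'' disclaimer.
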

\begin{proof}
Let $X$, $\bar{X}$, and $Y$ be the vector fields defined in the proof of Theorem~\ref{thm:main}.
Similarly, let $A \in \reals^{m \times m}$ be the matrix game we arrived at by applying Proposition~\ref{prop:glv} to $Y$.
From Proposition~\ref{prop:glv} we know that $m-1$ is at least the number of unique monomials in the generalized polynomials in $Y$, so the proof follows by bounding the number of unique monomials from above.

From Proposition~\ref{prop:cardona} we know that $X$ is a polynomial vector field of degree $58$.
As mentioned in Appendix~\ref{a:cardona}, the specific degree of $58$ was derived from follow-up work by~\cite{hainry2009decidability} analyzing the construction by~\cite{graca2008computability}.
However, although the vector field is technically constructible, actually constructing $X$ to simulate a universal Turing machine is non-trivial in practice.
With this complication in mind, a crude upper bound on the number of unique monomials in $X$ is simply the number of unique monomials of degree $58$ in $18$ variables.
Therefore, a standard combinatorial argument tells us that the number of unique monomials in the polynomials of $X$ is at most ${58+18 \choose 18} = {76 \choose 18}$.
The construction of $\bar{X}$ cannot increase the number of monomials counted by this combinatorial argument since it can only introduce unique monomials via the term $1 - \|\x\|_2^2$, which is already counted in the bound ${76 \choose 18}$.
Similarly, we construct $Y$ by translating $\bar{X}$ by a constant and therefore can only introduce the constant monomial (i.e.~terms with all variables having zero exponents) which is already being counted.
Thus we have found that $m-1 \leq {76 \choose 18}$, which implies $m \leq {76 \choose 18} + 1$.
\end{proof}

\section{Deriving MWU as discrete-analogue of Replicator Dynamics}\label{a:MWU_to_rep}
Let $\delta: \reals^{n} \to \Delta^{n}$ be the \emph{logit map} defined as 
\begin{equation*}
    \delta_i(\y) = \frac{\ex{\y_{i}}}{\sum_{j \in [n]} \ex{\y_j}}~, \qquad \y \in \reals^{n}, i \in [n]~.
\end{equation*}
\cite{hofbauer2009time} showed that the flow generated by replicator dynamics can be written as 
\begin{equation}\label{eq:logit_rep}
    \x_i(t) = \delta_i(\y(t)) = \frac{\ex{\y_{i}(t)}}{\sum_{j \in [n]} \ex{\y_{j}(t)}}~, \qquad \y(0) \in \reals^{n}, i \in [n], t \in \reals~,
\end{equation}
where $\x$ and $\y$ are the mixed strategy and cumulative payoff vectors given in eq.~\eqref{eqn:FTRL}.
Rewriting eq.~\eqref{eq:logit_rep} in the form of eq.~\eqref{eqn:FTRL} gives an explicit representation of replicator dynamics' trajectories as a functions of cumulative payoffs,
\begin{equation}
\label{eq:rep_disc_prep}
\begin{aligned}
\y_i(t)&= \y_{i}(0)+\int_0^t \sum_{j \in [n]} A_{i,j}\delta_j(\y(s)) ds \\
\x_i(t)&= \delta_i(\y(t))~.
\end{aligned}
\end{equation}
By applying a standard Euler discretization with step size $\eta$ to the payoffs $\y$ in eq.~\eqref{eq:rep_disc_prep}, we find
\begin{equation*}
    \y_i(t + \eta) \approx \y_i(t) + \eta \ \dot{\y}_i(t) = \y_i(t) + \eta \sum_{j \in [n]} A_{i,j}\delta_j(\y(t))~.
\end{equation*}
Finally, iteratively applying this Euler discretization of the cumulative payoffs and using the logit map will give us the well-known MWU algorithm.
Formally, denoting the discretization's $t^{\text{th}}$ iterate by $\y^t$, we write MWU as
\begin{equation}
\label{eq:disc_mwu}
\begin{aligned}
\y_i^{t+1} &= \y_i^t + \eta \sum_{j \in [n]} A_{i,j}\delta_j(\y^{t}) = \y_i^0 + \eta \sum_{\tau = 1}^{t}A_{i,j}\delta_j(\y^{t})\\
\x_i^{t+1} &= \delta_i(\y^{t+1}) = \delta_i(\y^0 + \eta \sum_{\tau = 1}^{t}A_{i,j}\delta_j(\y^{t}))~.
\end{aligned}
\end{equation}

As the form of MWU in eq.~\eqref{eq:disc_mwu} was found via an Euler discretization, a standard result in numerical analysis tells us that the error accrued by a single iteration of MWU starting from the same initial conditions as replicator dynamics will satisfy
\begin{equation*}
    \|\y_i^{1} - \y_i(\eta)\| \leq \mathcal{O}(\eta^2)~.
\end{equation*}
However, since we are simulating Turing machines in the space of mixed strategies, we need error bounds on the probability simplex itself and not in the space of cumulative payoffs.

\section{Proof of Lemma~\ref{lem:LE}}\label{a:LE}
\setcounter{theorem}{9}
\begin{lemma}
Let $\Phi$ be the flow generated by replicator dynamics and $\x^{t}$ be the mixed strategy found on the $t^{\text{th}}$ iterate of MWU. 
The error accrued by a single iteration of MWU with step-size $\eta > 0$ is
\begin{equation*}
    \|\x^{t+1} - \Phi(\eta, \x^{t})\|_{\infty} \leq  1 - e^{-2\eta}~.
\end{equation*}
\end{lemma}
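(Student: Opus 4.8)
The plan is to rewrite both a single MWU step and the time-$\eta$ replicator flow in a common multiplicative ``softmax'' form, and then to bound their ratio coordinate by coordinate. Assume, as is standard for matrix games, that all payoffs lie in $[0,1]$; note that shifting every entry of $A$ by a common constant changes neither the replicator flow (the constant cancels in $(A\x)_i-\x^\Trans A\x$) nor the MWU update (it cancels between numerator and denominator), so the particular choice of unit-length payoff interval is immaterial.

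First I would unwind the definitions. Using the logit representation of replicator dynamics from eq.~\eqref{eq:rep_disc_prep}, the flow is
\[
\Phi_i(\eta,\x^{t})=\frac{x^{t}_i\, e^{\int_0^\eta (A\x(s))_i\,ds}}{\sum_k x^{t}_k\, e^{\int_0^\eta (A\x(s))_k\,ds}}, \qquad \x(s)=\Phi(s,\x^{t}),
\]
while collapsing the MWU update in eq.~\eqref{eq:disc_mwu} and using $\x^{t}=\delta(\y^{t})$ gives the analogous
\[
x^{t+1}_i=\frac{x^{t}_i\, e^{\eta (A\x^{t})_i}}{\sum_k x^{t}_k\, e^{\eta (A\x^{t})_k}}.
\]
Thus MWU simply freezes the payoff at $A\x^{t}$ over the whole step, whereas replicator integrates the moving payoff $A\x(s)$; since payoffs lie in $[0,1]$, both exponents $\eta(A\x^{t})_i$ and $\int_0^\eta (A\x(s))_i\,ds$ lie in $[0,\eta]$.

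The crux is to control the ratio $x^{t+1}_i/\Phi_i(\eta,\x^{t})$, which equals $e^{\,\eta(A\x^{t})_i-\int_0^\eta (A\x(s))_i ds}$ times the ratio of the two normalising denominators. The exponential factor lies in $[e^{-\eta},e^{\eta}]$, being the exponential of a difference of two numbers in $[0,\eta]$; and each denominator is a convex combination of terms $e^{(\cdot)}$ with exponents in $[0,\eta]$, hence lies in $[1,e^{\eta}]$, so the denominator ratio also lies in $[e^{-\eta},e^{\eta}]$. Multiplying, $x^{t+1}_i/\Phi_i(\eta,\x^{t})\in[e^{-2\eta},e^{2\eta}]$, and the same holds for the reciprocal. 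Then $x^{t+1}_i\ge e^{-2\eta}\Phi_i(\eta,\x^{t})$ gives $\Phi_i(\eta,\x^{t})-x^{t+1}_i\le(1-e^{-2\eta})\Phi_i(\eta,\x^{t})\le 1-e^{-2\eta}$, and symmetrically $\Phi_i(\eta,\x^{t})\ge e^{-2\eta}x^{t+1}_i$ gives $x^{t+1}_i-\Phi_i(\eta,\x^{t})\le 1-e^{-2\eta}$; taking the maximum over $i$ proves the lemma.

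I expect the only delicate point to be exactly the nonlinearity flagged in the text: the softmax normalising constants of the two updates differ, so one cannot compare numerators directly. Bounding the numerator ratio and the denominator ratio separately, each within a factor $e^{\pm\eta}$, is what keeps the comparison inside the tight window $e^{\pm2\eta}$; a cruder one-sided estimate would only yield the weaker bound $e^{2\eta}-1$, so carrying out the two-sided argument above is what produces the sharp constant $1-e^{-2\eta}$.
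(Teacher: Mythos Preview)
Your proof is correct and is essentially the paper's argument: both rewrite the two updates in softmax/weight form, use boundedness of the payoffs to trap the ratio of unnormalised weights and the ratio of normalising constants each within $[e^{-\eta},e^{\eta}]$, combine to get $x^{t+1}_i/\Phi_i(\eta,\x^t)\in[e^{-2\eta},e^{2\eta}]$, and convert this multiplicative bound to the additive bound $1-e^{-2\eta}$. The only differences are cosmetic---the paper normalises payoffs to $[-1,1]$ and names the weights $W_i,\hat W_i$ rather than writing out the softmax explicitly---and your two-sided closing step is in fact spelled out more carefully than the paper's.
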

\begin{proof}
Suppose without loss of generality that for any action $i$ the expected payoff is bounded to $[-1,1]$, i.e.~$\sum_{j \in [n]} A_{i,j}\delta_j(\y) \in [-1,1]$.\footnote{The assumption that expected payoffs are bounded to $[-1,1]$ does not affect learning dynamics since we can always normalize the payoff matrix by its largest element.}
Let $W(t) = \sum_{j \in [n]} \ex{\y_{j}(t)}$ and $W_i(t) = \ex{\y_{i}(t)} = x_i(t) W(t)$.
Then continuous time RD becomes $$\x_i(t) = \frac{W_i(t)}{W(t)}~.$$
Similarly, define $\hat{W}^t = \sum_{j \in [n]} \ex{\y_{j}^{t-1}}$ and $\hat{W}_i^t = \ex{\y_{i}^{t-1}} = x_i^t \hat{W}^t$.
Then MWU becomes $$\x_i^t = \frac{\hat{W}_i^t}{\hat{W}^t}~.$$

We are interested in bounding the local error of MWU as a discretization of RD, i.e.~the error introduced by a single step of MWU relative to RD after a single step starting from the same point.
Thus without loss of generality we will focus on the first iterate of MWU and RD after $t = \eta$ amount of time.
Since expected payoffs are bounded to $[-1,1]$ we deduce from the analysis in Appendix~\ref{a:MWU_to_rep} that $$\hat{W}_i^1 \ex{-\eta} \leq W_i(\eta) \leq \hat{W}_i^1 \ex{\eta}~,$$
which implies $$\hat{W}^1 \ex{-\eta} \leq W(\eta) \leq \hat{W}^1 \ex{\eta}~.$$
Hence $$\x_i^1 \ex{-2\eta} \leq \x_i(\eta) \leq \x_i^1 \ex{2\eta}$$ whenever RD and MWU start from the same initial condition.

We have thus found that the local error introduced by a single time step is $$\|\x^{1} - \x(\eta)\| \leq  \|\x^{1} - \x^1 \ex{-2\eta}\| \leq  |1 - \ex{-2\eta}|~.$$
Observing that $\eta > 0$ gives the result.
\end{proof}

\section{Proof of Lemma~\ref{lem:GE}}\label{a:GE}
\setcounter{theorem}{10}
\begin{lemma}
Let $\Phi$ be the flow generated by replicator dynamics and $\x^{t}$ be the mixed strategy found on the $t^{\text{th}}$ iterate of MWU.
The error accrued after $t+1$ iterations of MWU with step-size $\eta > 0$ is
\begin{equation*}
    \|\x^{t+1} - \Phi(t\eta, \x^{0})\| \leq \mathcal{O}\left(e^{t}\right)~.
\end{equation*}
\end{lemma}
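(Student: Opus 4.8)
The plan is to run the standard ``Lady Windermere's fan'' global-error estimate for a one-step method: use Lemma~\ref{lem:LE} as the local (per-iteration) error, and control the amplification of past errors by the Lipschitz regularity of the replicator flow. First I would record that replicator dynamics~\eqref{eq:replicator} is a polynomial vector field, hence Lipschitz on the compact set $\Delta^m$ with some constant $L_0$ depending only on $A$; by Gr\"onwall's inequality the time-$\eta$ flow map $\Phi(\eta,\cdot)$ is then Lipschitz on $\Delta^m$ with constant at most $e^{L_0\eta}$. Since both the MWU iterates $\x^j$ and the continuous trajectory $\Phi(\cdot,\x^0)$ remain in $\Delta^m$, this Lipschitz bound is available at every step.

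Next I would set up the telescoping decomposition along a family of hybrid trajectories. Fix $t$ and, for $j=0,\dots,t+1$, let $\z_j = \Phi\big((t+1-j)\eta,\x^{j}\big)$ be the point obtained by running MWU for $j$ steps from $\x^0$ and then flowing for the remaining time, so that $\z_0 = \Phi\big((t+1)\eta,\x^0\big)$ is the exact solution and $\z_{t+1}=\x^{t+1}$ is the numerical one. Using the semigroup property $\Phi\big((t+1-j)\eta,\x^j\big)=\Phi\big((t-j)\eta,\Phi(\eta,\x^j)\big)$ from \S\ref{subsec:dynamics_prelim} together with the Lipschitz bound, one step of the telescope satisfies
\[
\|\z_{j+1}-\z_j\| = \big\|\Phi\big((t-j)\eta,\x^{j+1}\big)-\Phi\big((t-j)\eta,\Phi(\eta,\x^j)\big)\big\| \le e^{L_0(t-j)\eta}\,\big\|\x^{j+1}-\Phi(\eta,\x^j)\big\|,
\]
and the last factor is $\le 1-e^{-2\eta}$ by Lemma~\ref{lem:LE} (up to the fixed finite-dimensional norm-equivalence constant, since Lemma~\ref{lem:LE} is phrased in $\ell_\infty$). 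Summing the triangle inequality over $j$ and re-indexing the geometric series yields
\[
\big\|\x^{t+1}-\Phi\big((t+1)\eta,\x^0\big)\big\| \le (1-e^{-2\eta})\sum_{k=0}^{t} e^{L_0 k\eta} = (1-e^{-2\eta})\,\frac{e^{L_0(t+1)\eta}-1}{e^{L_0\eta}-1}.
\]

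Finally I would extract the asymptotics in the iteration count. With $\eta$ fixed, $1-e^{-2\eta}$ and $e^{L_0\eta}-1$ are both $\Theta(\eta)$ constants, so the prefactor is $O(1)$ and the right-hand side is $\mathcal{O}\big(e^{L_0\eta\,t}\big)$, i.e.\ grows exponentially in $t$ with a base governed by the Lipschitz constant of replicator dynamics and the step size; this is the content of the stated $\mathcal{O}(e^{t})$ bound, the constant in the exponent being absorbed into the asymptotic notation. I expect the only real delicacy to be the bookkeeping rather than any substantive difficulty: applying the semigroup identity to the correct intermediate points $\z_j$, taking the Lipschitz constant over the compact simplex on which all iterates and trajectories live, and transporting the per-step $\ell_\infty$ bound from Lemma~\ref{lem:LE} to the norm used in the global statement. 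Everything else is the routine accumulation-of-local-errors argument from numerical analysis.
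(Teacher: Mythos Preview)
Your proposal is correct and is essentially the paper's own argument: both invoke Lipschitz continuity of the replicator flow on the compact simplex to get a per-step amplification factor $e^{L\eta}$, feed in the local error $1-e^{-2\eta}$ from Lemma~\ref{lem:LE}, and obtain the identical closed-form bound $(1-e^{-2\eta})\,\dfrac{e^{L(t+1)\eta}-1}{e^{L\eta}-1}$. The only cosmetic difference is that you telescope directly (Lady Windermere's fan) while the paper writes the one-step recursion $E^{t+1}\le (1-e^{-2\eta})+e^{\eta L}E^{t}$ and then applies the discrete Gr\"onwall lemma---two standard and equivalent presentations of the same accumulation-of-local-errors estimate.
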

\begin{proof}
The flow $\Phi$ is $C^1$ and $\Delta^n$ is compact, so we know that $\Phi$ is Lipschitz continuous.
Let $L$ denote the Lipschitz constant for $\Phi$ with respect to $\|\cdot\|_{\infty}$.
It follows that for every initial condition $\x^0 \in \Delta^n$,
\begin{align*}
    E^{t+1} = \|\x^{t+1} - \Phi((t+1)\eta, \x^{0})\| &= \|\x^{t+1} - \Phi(\eta,\Phi(t\eta, \x^{0}))\|\\
    &= \|\x^{t+1} - \Phi(\eta,\x^{t}) + \Phi(\eta,\x^{t}) - \Phi(\eta,\Phi(t\eta, \x^{0}))\|\\
    &\leq \|\x^{t+1} - \Phi(\eta,\x^{t})\| + \|\Phi(\eta,\x^{t}) - \Phi(\eta,\Phi(t\eta, \x^{0}))\|\\
    &\leq |1 - \ex{-2\eta}| + e^{\eta L}\|\x^{t} - \Phi(t\eta, \x^{0})\|\\
    &= |1 - \ex{-2\eta}| + e^{\eta L} E^{t}
\end{align*}

To conclude our proof, we require a special case of the discrete Gronwall lemma.
This powerful tool for numerical error analysis tells us that if, for some constants $a$ and $b$ with $a>0$, a positive sequence $\{z^\tau\}_{\tau=0}^{t}$ satisfies 
\begin{equation*}
    z^{\tau+1} \leq b + a z^{\tau}~, \qquad \forall \tau \in [t-1]~,
\end{equation*}
then for $a \neq 1$
\begin{equation*}
    z^{\tau} \leq a^\tau z^0 + b \frac{a^\tau - 1}{a - 1}~, \qquad \forall \tau \in [t]~,
\end{equation*}
and for $a = 1$
\begin{equation*}
    z^{\tau} \leq z^0 + \tau b~, \qquad \forall \tau \in [t]~.
\end{equation*}

Recall that both $\eta > 0$ and $L > 0$ by definition. 
Let $z^{t} = E^{t}$, $a = e^{\eta L}$, and $b = |1 - \ex{-2\eta}|$.
Applying the discrete Gronwall lemma yields
\begin{equation*}
    E^{t+1} \leq e^{(t+1) \eta L} E^{0} + |1 - \ex{-2\eta}| \frac{e^{(t+1) \eta L} - 1}{e^{\eta L} - 1}~.
\end{equation*}
Clearly $E^0 = 0$ since MWU and replicator dynamics have the same initial conditions.
Thus we have shown
\begin{equation*}
    E^{t+1} \leq |1 - \ex{-2\eta}| \frac{e^{(t+1) \eta L} - 1}{e^{\eta L} - 1}~,
\end{equation*}
which concludes the proof.
\end{proof}

\end{document}